\providecommand{\keywords}[1]{\textbf{\textit{Keywords---}} #1}
\providecommand{\pgfsyspdfmark}[3]{}
\newcommand{\sgn}{\text{sgn }}
\DeclareMathOperator*{\argmin}{arg\,min}
\newcommand{\defeq}{\vcentcolon=}
\newcommand{\eqdef}{=\vcentcolon}
\theoremstyle{plain}
\newtheorem{theorem}{Theorem}[section]
\theoremstyle{plain}
\newtheorem{corollary}{Corollary}[theorem]
\theoremstyle{definition}
\newtheorem{lemma}{Lemma}[section]
\theoremstyle{definition}
\newtheorem{definition}{Definition}[section]
\theoremstyle{definition}
\begin{document}
\title{NISQ-ready community detection based on separation-node identification}

\author{Jonas Stein$^{1}$\orcidlink{0000-0001-5727-9151}, Dominik Ott$^{1}$, Jonas Nüßlein$^{1}$\orcidlink{0000-0001-7129-1237}, David Bucher$^{2}$\orcidlink{0009-0002-0764-9606}, Mirco Schoenfeld$^{3}$\orcidlink{0000-0002-2843-3137} and Sebastian Feld$^{4}$\orcidlink{0000-0003-2782-1469}\\
	\normalsize $^{1}$Mobile and Distributed Systems Group, LMU Munich\\
    \normalsize $^{2}$Aqarios GmbH\\
	\normalsize $^{3}$Data Modelling \& Interdisciplinary Knowledge Generation, University of Bayreuth\\
    \normalsize $^{4}$Delft University of Technology\\
	\normalsize e-mail: jonas.stein@ifi.lmu.de
}

\maketitle

\begin{abstract}
The analysis of network structure is essential to many scientific areas, ranging from biology to sociology. As the computational task of clustering these networks into partitions, i.e., solving the community detection problem, is generally NP-hard, heuristic solutions are indispensable. The exploration of expedient heuristics has led to the development of particularly promising approaches in the emerging technology of quantum computing. Motivated by the substantial hardware demands for all established quantum community detection approaches, we introduce a novel QUBO based approach that only needs number-of-nodes many qubits and is represented by a QUBO-matrix as sparse as the input graph's adjacency matrix. The substantial improvement on the sparsity of the QUBO-matrix, which is typically very dense in related work, is achieved through the novel concept of separation-nodes. Instead of assigning every node to a community directly, this approach relies on the identification of a separation-node set, which -- upon its removal from the graph -- yields a set of connected components, representing the core components of the communities. Employing a greedy heuristic to assign the nodes from the separation-node sets to the identified community cores, subsequent experimental results yield a proof of concept. This work hence displays a promising approach to NISQ ready quantum community detection, catalyzing the application of quantum computers for the network structure analysis of large scale, real world problem instances.
\end{abstract}


\keywords{Quantum Computing, Community Detection, QUBO, NISQ}

\IEEEpeerreviewmaketitle

\section{Introduction}
In the era of digitization, the amount of collected data is rising rapidly. This poses substantial problems in data analysis as the algorithms employed there typically have superlinear and thus deficient runtime for many relevant datasets. In this work, we investigate a new approach to cope with this problem in the domain of graph structure analysis. Graphs are one of the central data structures used in information theory and find application in a vast range of scientific disciplines \cite{Bondy1976, Mashaghi2004, Shah2019}. The task of identifying the inherit structure of a graph is known as community detection \cite{FORTUNATO201075}. In practice, the use of corresponding clustering methods allows for the discovery of structural information from real world networks in domains ranging from social science to biology \cite{Girvan2002, Fani2017}.

Although no exact definition has been agreed upon, a graph is typically said to inherit a \textit{community structure} if it can be partitioned in a way such that the number of edges within the partitions is higher than the number of edges between the partitions \cite{Girvan2002}. While some approaches exist that can provably find existing community structures, all of them are NP-hard \cite{Nadakuditi2012, Brandes2006, Decelle2011, Newman2016}. This indicates a general NP-hardness of community detection and hence poses a demand for efficient heuristics to acquire solutions in reasonable time. Motivated by recent advancements and promising results in solving NP-hard problems in the field of quantum computing (QC) \cite{Arute2019, Shaydulin2019, Denchev2016, Albash2018}, we investigate possible advantages in building such heuristics by utilizing the more powerful algorithmic toolset available in QC.

In general, quantum computers allow for the application of quantum mechanical effects to do computation. Based on the concepts of superposition and entanglement, quantum computers can solve many computational problems provably faster than classical computers \cite{Grover1996, Shor1997, Lloyd1996}. In the case of community detection, related work has shown promising results using the popular modularity maximization approach \cite{Shaydulin2019, Ushijima-Mwesigwa2017}. Modularity is a measure for the quality of a given partitioning based on comparing the edge distribution of the given graph to the edge distribution of a graph with the same node degree but inheriting no community structure \cite{Newman2004}. The more these distributions differ, the higher the modularity indicating a clearer community structure. While this approach is provably optimal in the sense that no other approach could detect a community structure when modularity maximization cannot \cite{Nadakuditi2012}, its implementation on a quantum computer is cumbersome, especially for current quantum computers.

Present implementations of modularity maximization on quantum computers make use of the intrinsic quadratic nature of the modularity \cite{Shaydulin2019, Ushijima-Mwesigwa2017}. Simulating the time evolution of a specific quantum physical system, i.e., typically the transverse field Ising Model under adiabatic time evolution, a quantum heuristic solver for quadratic unconstraint binary optimization (QUBO) problems (e.g., modularity maximization) can be implemented on a quantum computer \cite{Shaydulin2019, Kadowaki1998}. Even though no quantum speedups where proven for solving NP-hard optimization problems with this approach jet, many cases of potential scaling advantages have been identified, with modularity maximization being one of them \cite{Shaydulin2019, Denchev2016, Albash2018}.

A critical limitation of the established quantum modularity maximization approach hindering its execution on near term quantum hardware is the size of the search space in optimization. Scaling linearly in the number of nodes and the number of communities, the required amount of quantum bits (qubits) needed for representing a specific solution quickly exceeds the number of qubits available in present noisy intermediate scale quantum (NISQ) hardware \cite{Preskill2018}.

Motivated by these results, we develop a novel approach to community detection, specialized for (quantum heuristic) QUBO solving that uses a smaller search space than the state-of-the-art quantum modularity maximization approach. This objective led to the sociologically inspired approach of defining a community by its extreme ends, similar to, e.g., differentiating political parties by their position on the left-right spectrum. For graphs, we translate this idea to the existence of, what we will later define as, a bijective set of separation nodes. The removal of the nodes contained in this set then yields connected components, which represent the "cores" of the communities. We subsequently conduct experiments, that indicate that this essentially solves the hard part of the community detection problem, as the community assignment for the separation nodes can typically be obtained using a greedy optimizer.

This idea allows for a quantum-classical hybrid detection of communities while merely using one qubit for every node in the graph with a single call to a QUBO solver. We show empirically that such a set of separation nodes can be found for graphs inheriting community structure and introduce a quantum heuristic approach to find them, constituting a proof-of-concept.

This paper is structured in the following way: in section \ref{sec:relatedwork} we describe the current state of the art of quantum community detection, in section \ref{sec:concept} the separation node set approach to (quantum) community detection is introduced to then get evaluated in section \ref{sec:evaluation} before concluding the findings in section \ref{sec:conclusion}.


\section{Related Work}
\label{sec:relatedwork}
With the advent of quantum optimization heuristics like quantum annealing, possible quantum advantages have been explored for many optimization problems \cite{Dalyac2021}. Easily allowing for a binary encoding of solutions and showing promising performance, community detection quickly became a popular problem in quantum optimization \cite{AKBAR2020}.

Representing community detection natively as a QUBO problem in the basic case of partitioning into $k=2$ communities, modularity maximization was the first approach used in quantum community detection \cite{Ushijima-Mwesigwa2017}. For a given graph $G=\left(V,E\right)$, the modularity of a partitioning into $V_0=\left\lbrace v_i \in V\mid x_i=0\right\rbrace$ and $V_1=\left\lbrace v_i \in V\mid x_i=1\right\rbrace$ according to $x=\left(x_1,...,x_{\left|V\right|}\right) \in \left\lbrace 0, 1\right\rbrace^{\left|V\right|}$ is given by
\begin{equation}
    \dfrac{1}{2\left|E\right|}\sum_{ij}\left(a_{ij}-\dfrac{d_{i}d_{j}}{2\left|E\right|}\right)x_i x_j
\label{eq:mod_max}
\end{equation}
for given node degrees $d=\left(d_1,...,d_{\left|V\right|}\right)$ and $a_{ij}$ denoting the entries of the adjacency matrix $A$ of $G$. Straightforward calculations yield the resulting QUBO matrix $Q=A-\frac{dd^\intercal }{2\left|E\right|}$ which is sufficient to apply practically all currently available quantum optimization heuristics.

This approach to can be generalized to $k>2$ communities by introducing one-hot encoding. Here, the community assignment of a node $v_i \in V$ is encoded by a $k$-dimensional bit string $x_{i}=\left(x^{\left(1\right)}_i,..., x^{\left(k\right)}_i\right)$ with $x^{\left(l\right)}_i=1$ and $x^{\left(m\right)}_i=0$ $\forall m\neq l$ if the node $v_i$ is assigned to community $l$. The resulting optimization term is hence given by:
\begin{equation}
    \dfrac{1}{2\left|E\right|}\sum_{ij}\left(a_{ij}-\dfrac{d_{i}d_{j}}{2\left|E\right|}\right)\left(\sum_l x^{\left(l\right)}_i x^{\left(l\right)}_j\right)
\label{eq:mod_max_arbitrary_k}
\end{equation}
In order to formulate this as a QUBO problem, we have to add a suitably weighted penalty term $P(x)$ (for details, see \cite{Zahedinejad2020}) to the optimization term to indirectly enforce the one-hot encoding by $P(x)=0$ if every node is assigned to exactly one community and $P(x)>0$ otherwise:
\begin{equation}
    P(x)=\sum_i \left(1-\sum_l x^{\left(l\right)}_i\right)^2
\end{equation}

Apart from capitalizing on the inherent QUBO form of modularity maximization, many other quantum computing based approaches to community detection like Quantum Genetic Algorithms and Quantum Walks have been proposed in recent literature \cite{Sedghpour2017, Mukai2020}. A particularly promising approach for near term application on large graphs is based on exploiting the quadratic nature of regularity checking related to Szemeredi's Regularity Lemma (SRL) \cite{Reittu2019}. While similar to our approach, as the QUBO problems solved only involve $\left|V\right|$ qubits, it fundamentally works differently, as communities are identified iteratively. In essence, the algorithm proposed in \cite{Reittu2019} executes the following steps:
\begin{enumerate}
    \item Randomly split the given graph $G=\left(V,E\right)$ into two equally sized partitions $A \dot{\cup} B = V$ and delete all edges inside the partitions to yield a bipartite graph
    \item Find subsets $X\subseteq A$ and $Y\subseteq B$ such that $X=\left\lbrace v_{i}\in A \mid s_i = 1\right\rbrace$ and $Y=\left\lbrace v_{j}\in B \mid s_j = 1\right\rbrace$ where $s=\left(s_1,...,s_{\left|V\right|}\right)$ is the solution to the quadratic program given by:
    \begin{equation}
        \argmin_{s\in\left\lbrace 0,1\right\rbrace ^{\left|V\right|}} \sum_{\substack{v_{i}\in A \\ v_{j} \in B}} \left(d(A,B) - a_{ij}\right) s_i s_j
    \end{equation}
    Here, $d(V_1,V_2)$ denotes the link density of two disjoint sets $V_1$, $V_2$ given by $\frac{e(V_1,V_2)}{\left|V_1\right|\left|V_2\right|}$ and $e(V_1,V_2)$ represents the number of edges connecting $V_1$ and $V_2$.
    \item Identify $C\defeq X\cup Y$ to be a community and repeat steps 1) and 2) for the subgraph induced on $G$ by $V\setminus C$.
\end{enumerate}
While this approach has a solid graph theoretic foundation, the high number of needed solver calls and the dense QUBO matrix still pose nontrivial hardware execution challenges in the NISQ era.

Aiming to minimize the demands to the QUBO solver, we propose a radically different approach that only needs a single QPU call and whose QUBO matrix is topologically identical to adjacency matrix of the given graph and hence, equally sparse. The approach presented in this work essentially purifies a solution of relaxed community detection problem, i.e., the final community structure is represented by the solution of a QUBO problem which is based on classically computed, probabilistic community assignments for each node. While we introduce a particularly efficient approach to calculate the needed input for the QUBO problem, many other approaches to relaxed community detection have been proposed in related work like semidefinite programming or convexification \cite{Chan2011, Chen2018, Abdalla2022, LI2015}.

As derived in detail in the next section, our approach requires a solution for a novel relaxation of the community detection problem as input to the QUBO problem formulation. In essence, our approach demands for an estimate value for each edge, specifying whether it connects nodes belonging to different or the same communities. While such estimates could in principle be computed based on the output of solvers for the relaxed community detection problem by using, e.g., the KL-divergence of the community affiliations of neighboring nodes, we introduce a specialized estimation method tailored to this task. Notably, metrics like the edge betweenness centrality \cite{Brandes2001} also do not yield satisfactory results for our approach, as the difference in values between separation- and non-separation-edges is seemingly too small.

\section{Concept}
\label{sec:concept}
In the following, we explore the idea of performing community detection based on finding a suitable set of nodes separating the communities as defined in \ref{def:sepnodeset} in a rigorous mathematical manner. Meeting the demand from the derived QUBO formulation for a separation edge estimator, we subsequently introduce a promising heuristic approach based on the concept of modularity.

\subsection{Separation-node sets}
\label{subsec:sep-node-sets}
The approach presented in this paper consists of two steps:
\begin{enumerate}[label=\textnormal{(\arabic*)}]
	\item \phantomsection\label{itm:one} identifying a set of nodes separating communities and thus revealing the fundamental community structure (sec. \ref{subsec:mod-based-estimation} and \ref{subsec:edge-nc}) 
	\item \phantomsection\label{itm:two} classifying the community of each separation-node to finalize the community detection (sec. \ref{subsec:partitioning})
\end{enumerate}
Either using a trivial, greedy approach introduced in \ref{subsec:partitioning} or a slight adaptation of the well-known QUBO-formulation of modularity maximization \cite{Negre2020} to perform \ref{itm:two}, the main objective of this paper is the development of a QUBO-approach realizing \ref{itm:one}. To provide a more formal definition of \ref{itm:one}, we now introduce the concept of \textit{separation-node sets}. In the following, we use $\mathcal{S}$ to denote the set of all separation-node sets.
\begin{definition}
For a graph $G=\left(V,E\right)$ and a ground truth community structure $C$ partitioning $V$, we call $S\subseteq V$ a \textbf{set of separation-nodes} iff the connected components $\overline{S}_i$ partitioning the graph induced by $V\setminus S$ are distributed such that $\left\{\overline{S}_i\right\}_i$ is a refinement of $C$.
\label{def:sepnodeset}
\end{definition}
Equivalent to this definition, one could also demand the existence of a refinement map $\phi :\mathcal{P}\left(V\right)\rightarrow\mathcal{P}\left(V\right)$ mapping each connected component $\overline{S}_i\subseteq V$ onto a community $\phi\left(\overline{S}_i\right)=C_{j}\in C$ such that $\overline{S}_i\subseteq C_{j}$. Utilizing the notion of separation-node sets, \ref{itm:one} can be formulated as finding a smallest set of separation-nodes whose associated refinement map $\phi$ is ideally bijective. An example of a set of separation-nodes satisfying these conditions is depicted in subfigure \ref{subfig:separationNodeIdentification}, which is part of figure \ref{fig:approach} displaying the proposed approach. As it will become apparent in the evaluation, such well behaved separation node sets can also be found in graphs with application near topologies.

\begin{figure}[ht]
        \centering
        \begin{subfigure}[t]{0.22\textwidth}
            \centering
            \includegraphics[scale=1]{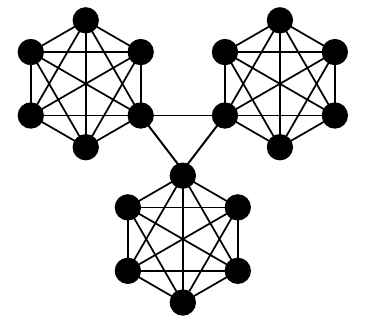}
            \caption{Exemplary graph consisting of three connected cliques.}
            \label{subfig:exampleGraph}
        \end{subfigure}
        \hfill
        \begin{subfigure}[t]{0.22\textwidth}  
            \centering 
            \includegraphics[scale=1]{2.pdf}
            \caption{\small Identification of a set of separation-nodes (marked in red).}
            \label{subfig:separationNodeIdentification}
        \end{subfigure}
        \vskip\baselineskip
        \begin{subfigure}[t]{0.22\textwidth}   
            \centering 
            \includegraphics[scale=1]{3.pdf}
            \caption{Removal of the set of identified separation-nodes and identification of the resulting connected components.}
            \label{subfig:connectedComponentsIdentification}
        \end{subfigure}
        \hfill
        \begin{subfigure}[t]{0.22\textwidth}   
            \centering 
            \includegraphics[scale=1]{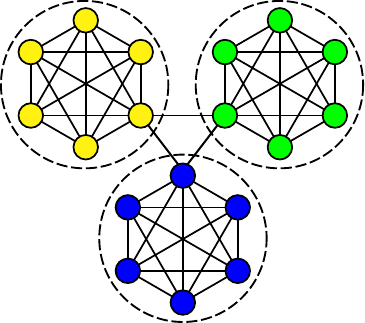}
            \caption{Classification of the community of all identified separation-nodes.}
            \label{subfig:separationNodeClassification}
        \end{subfigure}
        \caption{Outline of the workflow for the approach proposed approach of community detection via separation-node identification. The computationally expensive tasks of identifying a set of separation-nodes (subfigure \ref{subfig:separationNodeIdentification}) and classifying the communities for these nodes (subfigure \ref{subfig:separationNodeClassification}) are performed using quantum computing, while the computationally cheap tasks of removing the classified separation-nodes and identifying the resulting connected components (subfigure \ref{subfig:connectedComponentsIdentification}) are done classically.}
        \label{fig:approach}
    \end{figure}
The surjectivity of $\phi$ ensures that each community gets detected and its injectivity ascertains, that no communities get split. In the following, we will call separation-node sets injective, surjective or bijective iff the respective refinement function satisfies these conditions. In order to formulate a QUBO problem where the optimal solution represents the minimal separation-node set, we start by stating an alternate, more convenient, definition of minimal separation-node sets.

\begin{theorem}
\label{th:minimality}
For an adequate penalty term $P$ ensuring the separation-node set properties, the following equation states an equivalent definition of the set containing all minimal separation-node sets $\mathcal{S}_{min}$.
\begin{equation}
\mathcal{S}_{min}=\left\{\bigcup_{\substack{v_{i}\in V \\ x_{i}=0}}v_{i} \mid x=\argmin_{x\in\left\{0,1\right\}^{\left| V\right|}} 2P(x) -\sum_{v_{i}\in V} x_{i} \right\}
\label{eq:S-min}
\end{equation}
Here, we used $x\in\left\{0,1\right\}^{\left| V\right|}$ as a $0$-flag for separation-nodes, $a_{ij}$ to denote the entries of the adjacency matrix,  $c: V \rightarrow C$ as a mapping of nodes to their ground truth community and the Kronecker delta $\delta_{xy}$. For a penalty term $P$ ensuring the validity of the separation-node set definition by penalizing incident node pairs from strictly different communities where neither node is element of the sought-after separation-node set, see the following definition:
\begin{equation*}
P\left(x\right)\defeq \sum_{\left(v_{i},v_{j}\right)\in V^{2}}a_{ij}\left(1-\delta_{c\left(v_{i}\right)c\left(v_{j}\right)}\right) x_{i}x_{j}
\end{equation*}
\end{theorem}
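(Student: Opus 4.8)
The plan is to first fix the combinatorial meaning of the penalty term $P$ and only then argue about the optimization. Throughout, write $S_x \defeq \bigcup_{x_i=0} v_i = \{v_i \in V \mid x_i = 0\}$ for the candidate separation-node set encoded by a bit string $x$, and $f(x) \defeq 2P(x) - \sum_{v_i\in V} x_i$ for the objective. Since each summand of $P$ is a product of nonnegative factors, $P(x)\ge 0$, and since $a_{ij}$, $1-\delta_{c(v_i)c(v_j)}$ and $x_ix_j$ all lie in $\{0,1\}$ while the index set $V^2$ is symmetric, the nonzero terms come in pairs $(v_i,v_j),(v_j,v_i)$, so $P(x)\in\{0,2,4,\dots\}$; in particular $P(x)=0$ or $P(x)\ge 2$.

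\emph{Step 1: the penalty is adequate, i.e. $P(x)=0 \iff S_x\in\mathcal{S}$.} The term indexed by $(v_i,v_j)$ is nonzero exactly when $v_iv_j\in E$, $c(v_i)\neq c(v_j)$, and $v_i,v_j\notin S_x$. Hence $P(x)=0$ iff no edge of the induced subgraph on $V\setminus S_x$ joins two nodes of different ground-truth communities. I would then show this is equivalent to every connected component $\overline{S}_i$ of that subgraph being contained in a single community, which by Definition~\ref{def:sepnodeset} is exactly the condition $S_x\in\mathcal{S}$: if some component contained $u,w$ with $c(u)\neq c(w)$, a path joining them inside the component would have two consecutive vertices in distinct communities, giving a forbidden edge; conversely, if all components are monochromatic then any induced edge has both endpoints in one component and hence in one community. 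This graph-theoretic equivalence (including the trivial edge cases, such as $S_x=V$, which also shows the feasible set is nonempty so that $\mathcal{S}_{min}$ is well defined) is the step I expect to require the most care.

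\emph{Step 2: every minimizer of $f$ is feasible.} Suppose $P(x)>0$; by Step 1 there is a ``bad'' edge $v_iv_j$ with $c(v_i)\neq c(v_j)$ and $x_i=x_j=1$. Let $x'$ agree with $x$ except $x'_i=0$. Flipping this bit deletes from $P$ exactly the (nonnegative) terms with first or second index $i$, and at least the two terms $(v_i,v_j),(v_j,v_i)$ disappear, so $P(x')\le P(x)-2$, while $\sum_k x'_k=\sum_k x_k-1$. Therefore $f(x')\le 2\bigl(P(x)-2\bigr)-\bigl(\sum_k x_k-1\bigr)=f(x)-3<f(x)$, so no infeasible $x$ can minimize $f$. (This is precisely why the penalty weight $2$ in front of $P$ matters: it guarantees a strict decrease.)

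\emph{Step 3: feasible minimizers are the smallest separation-node sets, and conclusion.} On $\{x : P(x)=0\}$ we have $f(x)=-\sum_{v_i}x_i=|S_x|-|V|$, so minimizing $f$ over the feasible set is the same as minimizing $|S_x|$, i.e. selecting the minimum-cardinality members of $\mathcal{S}$. Combined with Step 2, the global minimizers of $f$ over $\{0,1\}^{|V|}$ are exactly the $x$ with $S_x\in\mathcal{S}_{min}$. Since $x\mapsto S_x$ is a bijection from $\{0,1\}^{|V|}$ onto $\mathcal{P}(V)$, applying it to $\argmin_{x} f(x)$ yields precisely $\mathcal{S}_{min}$, which is the identity~\eqref{eq:S-min}. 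Steps 2--3 are short bookkeeping once Step 1 is established, so the real content of the proof lies in Step 1.
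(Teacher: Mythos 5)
Your proposal is correct and follows essentially the same route as the paper's appendix proof: your Step~1 corresponds to the paper's lemmata showing $P(x)=0$ exactly characterizes separation-node sets, your Step~2 is its penalty-elimination lemma plus corollary (the weight $2$ combined with the symmetric double counting over $V^2$ giving a strict net decrease), and your Step~3 streamlines the paper's two-inclusion argument by restricting to the feasible set where the objective reduces to $|S_x|-|V|$. One small point in your favor: you correctly identify the offending incident pair as belonging to \emph{different} communities, whereas the paper's corresponding lemma proof misstates this as ``the same community'' (evidently a typo), so your write-up is, if anything, slightly cleaner.
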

\begin{proof}
See appendix \ref{subsec:apx-th-IV1}.
\end{proof}
Therefore, the task of finding a smallest set of separation-nodes for any given graph is native to the concept of QUBO and its formulation can be reduced to approximating $\delta_{c\left(v_{i}\right)c\left(v_{j}\right)}$ for incident node pairs $v_{i},v_{j}\in V$. This can be understood as calculating the probability of an edge being an interconnection of adjacent nodes belonging to different communities, or more formally, a \textit{separation-edge}.

Most interestingly, we can show that solving the QUBO problem stated in equation \ref{eq:S-min} is NP-hard for a specific estimator. To see this, we start by observing a substantial similarity of our QUBO formulation with the QUBO formulation of the Max-Clique problem as stated in \cite{Chapuis2019}:
\begin{equation}
    \argmin_{x\in\left\{0,1\right\}^{\left| V\right|}} 2\sum_{\left(v_{i},v_{j}\right)\in V^{2}} \left( 1- a_{ij}\right) x_i x_j -\sum_{v_{i}\in V} x_{i}
\end{equation}
for a given graph $G=(V,E)$ and its corresponding adjacency matrix $A$ with entries $a_{ij}$. Choosing the estimator $s:V\times V\rightarrow{} \left\lbrace 0,1\right\rbrace$ by $s\left(\left(v_i, v_j\right)\right)\defeq a_{ij}$, it becomes apparent, that the QUBO formulations are identical if we specify to use a complete graph of size $\left|V\right|$ as an input to our QUBO formulation. Leaving an extensive mathematical analysis of the NP-hardness for more realistic estimators to future work, this shows that the problem of finding a minimal separation-node set is NP-hard when treating the estimator as a variable. This result supports the pursuit of the proposed approach of using quantum computing in order to find a minimal separation-node. 

Returning to the initial goal of finding bijective separation-node sets, we now explore their surjectivity. A significant discovery regarding surjectivity is illustrated in figure \ref{fig:ce-surj}, showing no-free-lunch when using theorem \ref{th:minimality} to find surjective separation-node sets. This necessitates the addition of a penalty term to the QUBO formulation in order to ensure surjectivity when building upon theorem \ref{th:minimality}. For the formulation of a suitable penalty term, see appendix \ref{subsec:apx-insur}.

\begin{figure}[ht]
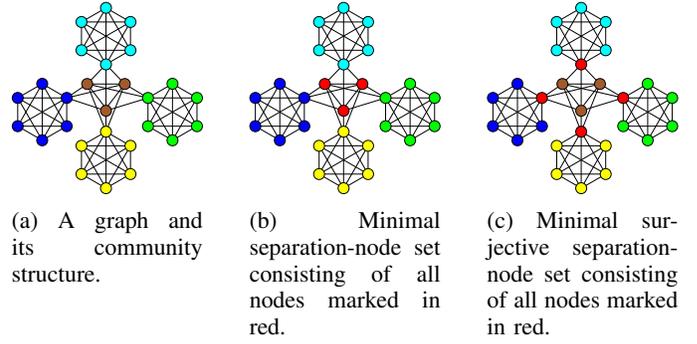

\begin{subfigure}[t]{.14\textwidth}
\centering
\includegraphics[width=\linewidth]{/surjectivity/1.pdf}
\caption{A graph and its community structure.}
\end{subfigure}\hfill
\begin{subfigure}[t]{.14\textwidth}
\centering
\includegraphics[width=\linewidth]{/surjectivity/2.pdf}
\caption{Minimal separation-node set consisting of all nodes marked in red.}
\end{subfigure}\hfill
\begin{subfigure}[t]{.14\textwidth}
\centering
\includegraphics[width=\linewidth]{/surjectivity/3.pdf}
\caption{Minimal surjective separation-node set consisting of all nodes marked in red.}
\end{subfigure}
\caption{Counterexample proving no-free-lunch when using theorem \ref{th:minimality} to find surjective separation-node sets.}
\label{fig:ce-surj}
\end{figure}

As our formulation results in a PUBO (polynomial unconstrained binary optimization) problem of degree $\mathcal{O}(\log_2 |V|)$, we conjecture, that this constraint cannot be realized in QUBO form without the addition of ancillary variables. Using the standard quadratization approach with the Rosenberg polynomial \cite{Rosenberg1975}, a QUBO formulation of this term demands superpolynomially many ancillary variables, i.e., $\mathcal{O}\left(\left|V\right|^{2\log_2 \log_2 \left|V\right|}\right)$. In the context of quantum annealing, this scaling beyond a quadratic number of qubits makes the surjective separation-node approach overly complex compared to the standard modularity maximization. In the gate model, the QAOA can be used to solve PUBO problems in principle, but as current hardware limitations prohibit adequate evaluation, we leave the exploration of the surjectivity constraint to future work.

As a consequence of not enforcing surjectivity, there exists a possibility that number of communities is incorrect after step \ref{itm:one} of detecting the fundamental community structure by separation-node set identification. Modifying step \ref{itm:two} slightly, this could in principle be compensated by iteratively increasing the number of possible communities until no further improvement of the modularity can be achieved. A clever way to do this, could be the elbow-method as known in clustering \cite{Thorndike1953}. For the alternative greedy approach for the second step \ref{itm:two}, the possibility of merging communities could be allowed.

Fortunately, conducted experiments show that topological structures precluding free lunch are scarce in practice. Therefore, we will omit the explicit demand for surjective separation-node sets in the following.

Analog to the surjectivity, there exist graph topologies like the one displayed in figure \ref{fig:ce-inj} showing no free lunch when using theorem \ref{th:minimality} to find injective separation-node sets. Hence, it appears necessary to ensure injectivity explicitly using a penalty term when building upon theorem \ref{th:minimality} in principle, as well. The formulation of such a penalty term also turns out to be rather tedious, as can be seen in appendix \ref{lem:inSignTerm}. In this case, we end up with an even higher dimensional PUBO problem for the injectivity than for the surjectivity. Luckily, compared to the surjectivity, the injectivity of a separation-node set is of less importance, as the second step \ref{itm:two} could easily be adapted to cope with this. Analog to the case of surjectivity, we observe such topological structures preventing free lunch quite rarely in conducted experiments, resulting in the analog dismissal of an explicit demand for the separation-node sets to be injective in practice.

\begin{figure}[ht]
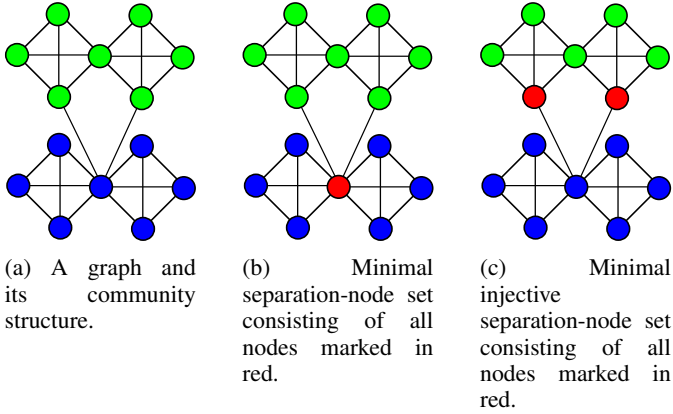

\begin{subfigure}[t]{.14\textwidth}
\centering
\includegraphics[width=\linewidth]{/injectivity/1.pdf}
\caption{A graph and its community structure.}
\end{subfigure}\hfill
\begin{subfigure}[t]{.14\textwidth}
\centering
\includegraphics[width=\linewidth]{/injectivity/2.pdf}
\caption{Minimal separation-node set consisting of all nodes marked in red.}
\end{subfigure}\hfill
\begin{subfigure}[t]{.14\textwidth}
\centering
\includegraphics[width=\linewidth]{/injectivity/3.pdf}
\caption{Minimal injective separation-node set consisting of all nodes marked in red.}
\end{subfigure}
\caption{Counterexample indicating no-free-lunch when using theorem \ref{th:minimality} to find injective separation-node sets.}
\label{fig:ce-inj}
\end{figure}

In summary, the apparent infrequence of topological structures preventing free lunch regarding bijectivity renders the QUBO-formulation stated in theorem \ref{th:minimality} to be a well-founded starting point for the stated proposition of QUBO based community detection via separation-node sets.

While this approach provides exact results for a perfect classification of separation-edges, it fully relies on a suitable estimation heuristic. Although many known measures for a variety of edge properties exist (as described in \ref{sec:relatedwork}), none showed to be entirely suitable for detecting separation-edges according pretesting conducted for this paper. Consequently, we now motivate a novel approach tailored for exactly this task based on the concept of modularity.

\subsection{Modularity based separation-edge estimation}
\label{subsec:mod-based-estimation}
Motivated by the proven optimality of modularity and by the fact that at its core, modularity stems on essentially estimating whether each node pair is likely to belong to the same or different communities, we start by showing how this idea can be used to estimate $\delta_{c\left(v_{i}\right)c\left(v_{j}\right)}$. For this, recall the definition of the entries of the modularity matrix:
\begin{equation}
m_{ij} \defeq \dfrac{a_{ij} - \mathbb{E}\left[ J_{ij}\right]}{\left|E\right|}
\end{equation}
As before $a_{ij}$ are the entries of the respective adjacency matrix, while $\mathbb{E}\left[ e_{ij}\right]$ denotes the expected value of the number of edges between $v_i$ and $v_j$, $J_{ij}$. Upon closer inspection, we observe two main cases:
\begin{itemize}
\item $m_{ij}>0$, iff less connectivity between $v_i$ and $v_j$ was to be expected, indicating that $v_i$ and $v_j$ likely belong to the same community
\item $m_{ij}<0$, iff more connectivity between $v_i$ and $v_j$ was to be expected, indicating that $v_i$ and $v_j$ likely belong to different communities
\end{itemize}
As the matrix entries $m_{ij}$ are normalized to the interval of $\left[-1,1\right]$ by the division with $\left|E\right|$, we can see, that using proper rescaling to the intervall of $\left[0,1\right]$, i.e., via $2\left(m_{ij}+1\right)$, this allows for an estimation of the term $\delta_{c\left(v_{i}\right)c\left(v_{j}\right)}$ in principle.

In practice however, this approach yields extremely bad estimations, as only the entries $m_{ij}$ of the modularity matrix are relevant, that correspond to a given edge $\left(v_i,v_j\right)\in E$. For these, it quickly becomes apparent, that $m_{ij}$ is typically larger than $0$, making this exact idea infeasible in practice. These considerations motivate an adaptation of modularity for the estimation of separation-edges as proposed in the following.

\subsection{Edge neighborhood connectivity based separation-edge estimation}
\label{subsec:edge-nc}
Exploiting the mathematical structure of modularity for a straightforward separation-edge estimation, we now introduce a promising generalization of the previous approach, which we coin as the \textit{neighborhood connectivity} of an edge. Instead of merely taking the direct connection between two nodes into account (i.e., an edge), the neighborhood connectivity of an edge considers connections between the neighborhoods of the nodes. In this context, the neighborhood $N_r\left(v\right)$ of a node $v\in V$ is defined to be the set of nodes with a shortest path of length $r$ to $v$. 

Based on this idea, we can rephrase the basic case of our generalization, i.e., modularity, as merely counting the number of unique edges on paths of length $1$ between the $0$-neighborhoods $N_0\left(v_i\right)$ and $N_0\left(v_j\right)$ of the respective nodes $v_i$ and $v_J$. The here proposed generalization introduces the following two new notions:
\begin{enumerate}[label=\textnormal{(\arabic*)}]
    \item Consider connections between $r$-neighborhoods with radius $r\geq 0$
    \item Also consider paths of length $2$
\end{enumerate}
Stating this more precisely in mathematical form, we now define the neighborhood connectivity $\nu_{r}^{\left(l\right)}$ of an edge given a path length $l$, and a neighborhood size $r$:
\begin{equation}
    \nu_{r}^{\left(l\right)}\defeq \dfrac{a_{r}^{\left(l\right)}-\mathbb{E}\left(a_{r}^{\left(l\right)}\right)}{n_{r}^{\left(l\right)}}
\end{equation}
In this definition, $a_{r}^{\left(l\right)}$ denotes the number of unique edges contained in paths of length $l$ connecting the $r$-neighborhoods of the given nodes which do not involve nodes or edges contained by the $\left(r-1\right)$-neighborhoods (as this would result in possible double counting of edges). Analogously to the definition of modularity, $\mathbb{E}\left(a_{r}^{\left(l\right)}\right)$ denotes the expected value corresponding to $a_{r}^{\left(l\right)}$ and $n_{r}^{\left(l\right)}$ acts as a normalization factor denoting the highest possible number $a_{r}^{\left(l\right)}$ can assume.

These values can be calculated based on a simple breadth first search with depth $r$ iterating of the neighborhood layers while choosing $v_i$ and $v_j$ as starting nodes. As for the expected value calculation, the configuration model has shown to be an adequate choice (which is in line with modularity). For details on this, we refer to our implementation which can be made available upon request to the authors.

Our preferred method of combining the results into the neighborhood connectivity $\nu$ of a given edge based on all $\nu_{r}^{\left(l\right)}$ is the dot product with a weight vector $w$ with entries $w_{r}^{\left(l\right)}\in \mathbb{R}_{0}^{+}$ such that their sum equals $1$:
\begin{equation}
    \nu\defeq\sum_{r=1}^{d}w_{r}^{\left(1\right)}\nu_{r}^{\left(1\right)} + \sum_{r=0}^{d-1}w_{r}^{\left(2\right)}\nu_{r}^{\left(2\right)}
\end{equation}
As we know that the standard modularity value is of little use, we chose $w_{0}^{\left(1\right)}=0$. We consider the remaining weights as hyperparameters, for which $w_{0}^{\left(2\right)}=0.5=w_{1}^{\left(1\right)}$ have proven to be suitable values according to conducted experiments.

\subsection{Assigning the separation-nodes to communities}
\label{subsec:partitioning}
As stated in sec. \ref{subsec:sep-node-sets}, we propose two different approaches to assigning the separation-nodes to communities, i.e., (1) a greedy strategy and (2) modularity maximization. In the experiments conducted in this paper, the greedy strategy was mainly employed for all experiments. It works as follows:
\begin{enumerate}[label=\textnormal{(\arabic*)}]
    \item count the number of edges to every know community for each separation-node
    \item assign the node with the most edges to a single community to that community
    \item update the counts for every neighboring separation-node
    \item repeat steps two and three until every separation-node is properly assigned to a community
\end{enumerate}
This algorithm has a runtime of the number of separation nodes $S$ times the number of communities $\left|C\right|$, $\mathcal{O}\left(S\cdot \left|C\right|\right)$ and hence runs very efficiently.

As the results calculated based on the edge neighborhood connectivity did not always show reasonable quality to sensibly use this greedy optimizer, we chose to use the standard of modularity maximization for these special cases. Fortunately, the well known QUBO approach to this \cite{Ushijima-Mwesigwa2017}, can be easily adapted to our situation, i.e., by clamping the values of the known community assignments, where clamping is to be understood in the same way as it is used in quantum Boltzmann machines \cite{Amin2018}. This yields a QUBO problem of size $\mathcal{O}\left(S\cdot \left|C\right|\right)$, which often can be solved a lot quicker than the original problem, as $S<\left| V\right|$ in practice.

\section{Evaluation}
\label{sec:evaluation}
The evaluation aims at the examination of the validity of the following two claims:
\begin{enumerate}[label=\textnormal{(\arabic*)}]
    \item\phantomsection\label{itm:claim1}  the assignment of separation-nodes to their communities is computationally easy, given a good enough estimator
    \item\phantomsection\label{itm:claim2} neighborhood connectivity allows for proof of concept results
\end{enumerate}
As we will show in the following, both claims appear to be valid according to the conducted experiments.

For investigating claim \ref{itm:claim1}, we propose to check, if the greedy separation-node assignment as described in sec. \ref{subsec:partitioning} is sufficient to assign the nodes of a well behaved separation-nodes to the correct communities. If this approach is indeed sufficient to obtain (nearly) perfect solutions, we reason that the claim is most likely valid.

In order to eliminate the possibility of an insufficient separation-node set, we use a synthetic dataset with known community structure, allowing for the use of a perfect estimator for the separation-edges. In order to find a very good separation-node set, we utilize a simple simulated annealing approach to solve the associated QUBO as defined in \ref{eq:S-min}. Regarding the synthetic dataset, we choose the stochastic block model (SBM) \cite{Holland1983} which is a widely used tool for benchmarking in the realm of community detection. Aiming to achieve realistic results, we use a graph of size $\left| V\right|=250$, structured into seven equally sized communities with varying intra- and interconnections between the communities, resembling three different difficulties, according to the phase transition of community detection on SBMs (for details on the phase transition, see \cite{Decelle2011}). As it becomes apparent in the corresponding figure \ref{fig:3by3}, the greedy separation-node assignment indeed yields optimal or at least close to optimal results, indicating the validity of claim \ref{itm:claim1}.

\begin{figure}[ht]
\centering
\includegraphics[width=0.45\textwidth]{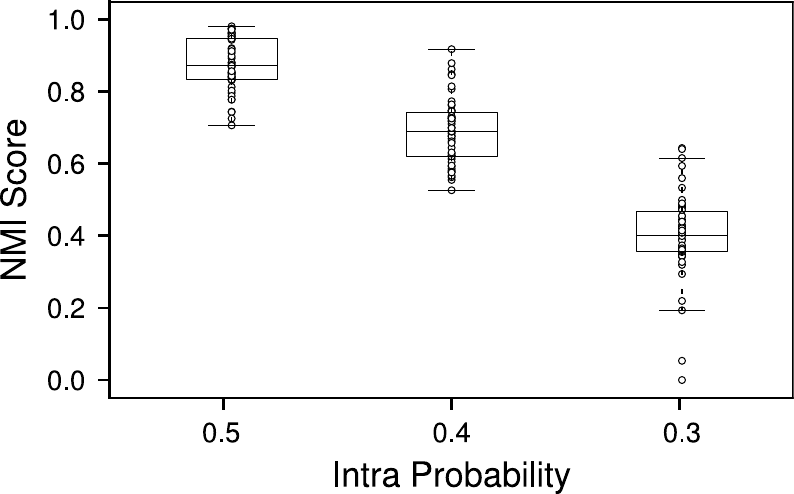}
\caption{This figure shows the NMI score of the presented approach for 50 different graphs each, based on ground truth and a perfect separation-edge estimator coupled with the greedy separation-node assignment. The NMI score as defined in \cite{Fred2003, Kuncheva2004} was used, as it resembles a well proven measure for the accuracy of a community given ground truth \cite{Danon2005}. The different probabilities for intra-community edges in the chosen SBM model resemble different difficulties according to the phase transition known for this model. The lower the stated probability, the harder the problem. The probabilities were chosen such that the hardest graphs barely differed from a null model inheriting no measurable structure up to the hardest that still allowed perfect NMI scores.}
\label{fig:3by3}
\end{figure}

Having seen solid results for the optimal estimator, we now want to investigate the performance of the here presented "neighborhood connectivity" approach for real world data and hence explore claim \ref{itm:claim2}. For this, we choose the greedy separation-node assignment, so that the separation-node identification displays the only non-trivial task, capable of solving the problem instances. Choosing standard real world benchmark graphs of varying size, we can observe stable results for most datasets in figure \ref{fig:real-world-data}, while often achieving 90 to 95\% optimal results.

\begin{figure}[ht]
\centering
\includegraphics[width=0.45\textwidth]{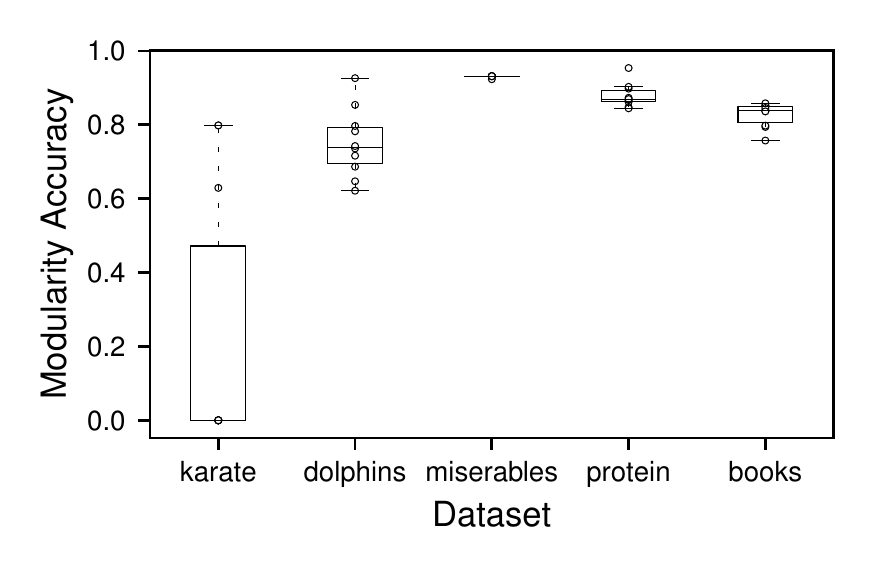}
\caption{This box plot displays the fraction of the achieved modularity score by the best known solution for selected standard benchmark datasets: (1) the social network of a karate club \cite{Zachary1977}, (2) the social interactions between dolphins \cite{Lusseau2003}, (3) the collectively appearing characters in the book "Les Miserables" \cite{Knuth1993}, (4) protein protein interacations \cite{Palla2005} and (5) jointly bought political books \cite{Newman2006}. Each graph was analyzed 10 times using simulated annealing. Our approach clearly does not work well for the karate club network. Closer inspections yield that the connected components resulting from the found separation-node sets often only consist of single nodes, indicating suboptimality in using neighborhood connectivtiy for this dataset.}
\label{fig:real-world-data}
\end{figure}

Motivated by these proof of concept results, we now investigate the performance of the proposed estimator (edge neighborhood connectivity) in order to explore its optimal mission scenario. For this, we again resort to equally formed SBM benchmark graphs with slightly higher intra-community connection probabilities, as they offer the comparison with ground truth information. Concretely, we chose these probabilities to be 0.75 for the easy case, 0.625 for the medium case and 0.5 for the hard case, which was the easy case for the experiments conducted with the perfect estimator (and picked previously as the hardest case to still yield perfect results).

Analogously to the perfect estimator, the identified separation-node sets were all valid and bijective in a small test run on 10 graphs. Switching the main optimization goal, we now examine the size of the identified separation-node sets for graphs of different difficulty, as displayed in figure \ref{fig:sep-node-set-size}. Here, we can see, that the sizes of the separation-node sets found are substantially larger than the best known solution, this becomes apparent especially for easy problem instances. Interestingly, the performance quality increases for harder problems in relative perspective, showing promising scaling behavior.

\begin{figure}[ht]
\centering
\includegraphics[width=0.45\textwidth]{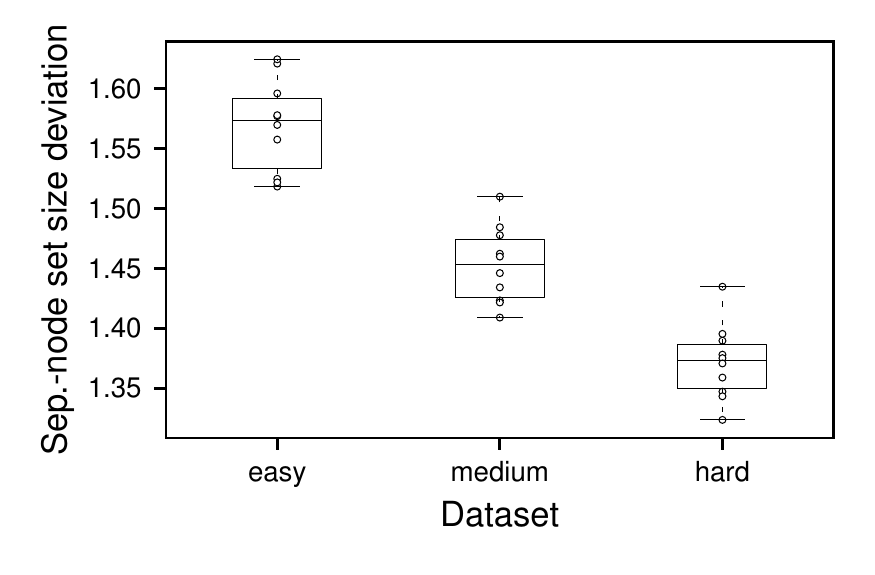}
\caption{The $y$-axis depicts the deviation factor from the best known separation-node set in size. Notably, the absolute sizes of the identified separation-node sets are typically similar over the different difficulties, while they rise slightly for larger graphs.}
\label{fig:sep-node-set-size}
\end{figure}

Although the separation-node sets found are well behaved, the combination with the greedy separation-node assignment to communities does yield substantially worse results than the perfect estimator, as shown in figure \ref{fig:LFR-NMI}.

\begin{figure}[ht]
\centering
\includegraphics[width=0.45\textwidth]{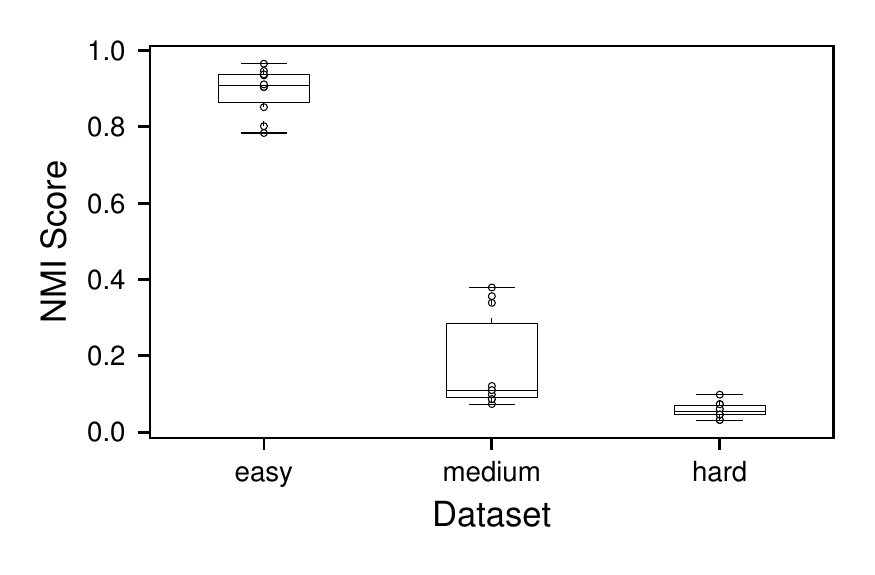}
\caption{This figure depicts the normalized mutual information score of the selected SBM benchmark graphs using the greedy assignment of separation-nodes to communities. A substantial drop off in performance can be observed for the harder datasets.}
\label{fig:LFR-NMI}
\end{figure}

Subsequent experiments show, that the performance for the medium and hard datasets can be improved significantly by exchanging the greedy approach for a simulated annealing based one, as shown in figure \ref{fig:LFR-ModMax-NMI}.

\begin{figure}[ht]
\centering
\includegraphics[width=0.45\textwidth]{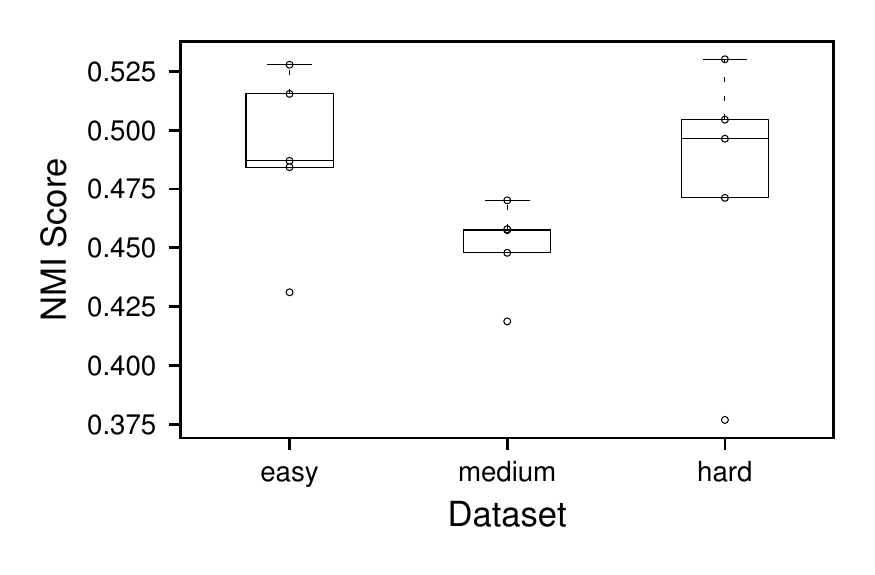}
\caption{This figure depicts the normalized mutual information score of the selected SBM benchmark graph using a simulated annealing based approach of assigning the separation-nodes to communities. The worse performance for the easy dataset clearly indicates that the chosen simulated annealing approach based on the QUBO as described in \ref{subsec:partitioning} is suboptimal in general.}
\label{fig:LFR-ModMax-NMI}
\end{figure}

As described in the caption of figure \ref{fig:LFR-ModMax-NMI}, simulated annealing based on the QUBO as described in \ref{subsec:partitioning} seems to be a suboptimal choice to assign separation-nodes to communities. We suspect that the reason for this resides in the large size of the search space for the given problem instances due to the employed one-hot encoding. As identified separation-node sets are typically sized up to 200 nodes (compared to the roughly 120 nodes for the perfect estimator), the search space for the problem instances thus contains roughly $\left(200\cdot 7\right)!=1400!$ possible solutions, as 7 different communities exist.

In order to put the results of the developed separation-edge estimator based on edge neighborhood connectivity into perspective with an optimal estimator, we now investigate its $R^2$ score in figure \ref{fig:R2-score}. Interestingly, the worse performance for larger datasets has no impact on the validty and bijectivity of the subsequently identified separation-node set, which is very promising in regards to scaling.

\begin{figure}[ht]
\centering
\includegraphics[width=0.45\textwidth]{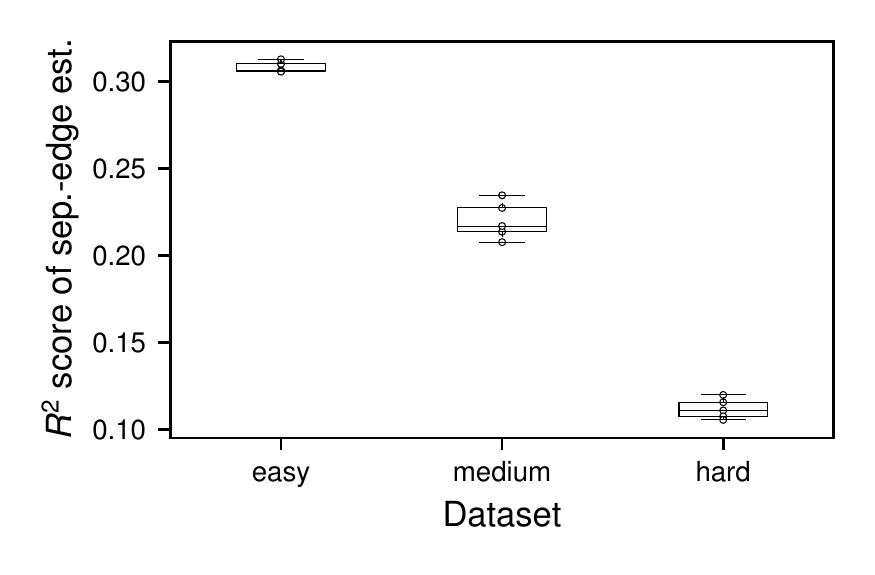}
\caption{$R^2$ score of the edge neighborhood connectivity based separation-edge estimator. In practice, an $R^2$ score of 30\% implies that merely 30\% of the variability of the ground truth has been accounted for. A strict trend towards worse results for harder datasets is clearly visible. This shows that the performance of the estimator decreases for harder problem instances as to be expected while still yielding somewhat accurate results.}
\label{fig:R2-score}
\end{figure}

\section{Conclusion}
\label{sec:conclusion}
Having set out with the goal of developing a quantum community detection approach that allows for the analysis of large graphs in the NISQ era, we presented the idea of identifying communities via their borders. The derived separation-node set based approach was shown to yield (close to) optimal results depending on the accuracy of the classical separation-edge estimator. The therefore proposed heuristic approach based on the introduced concept of "edge neighborhood connectivity" enabled for proof of concept results on real world data. In particular, as our approach merely requires $\left| V\right|$ qubits and as the corresponding QUBO is as sparse as the input graph $G=\left(V, E\right)$, separation-node based community detection resembles the least hardware demanding quantum computing approach to community detection to the best of our knowledge. The underlying trade-off necessary for this accomplishment clearly is the more demanding classical part in this hybrid approach (i.e., the separation-edge estimation). We firmly encourage future work on this heuristic, while conjecturing the incorporation of solutions of the relaxed community detection problem as highly beneficial. Furthermore, the exploration of adaptations of similar known metrics like edge betweenness centrality \cite{Li2017} also seem very interesting. Overall, we conclude our approach to be highly promising for accelerating the possibility of solving real world community detection problems using quantum computers and thus opening up a path towards network structure analysis in big data.

\section*{Acknowledgments}
This work was partially funded by the German BMWK project \textit{PlanQK} (01MK20005I) and the Bavarian StMWi project \textit{QAR-Lab Bayern} (46-6665e/443), respectively.

\bibliographystyle{unsrt}  
\bibliography{main}  

\begin{thebibliography}{10}

\bibitem{Bondy1976}
J.~A. Bondy and U.~S.~R. Murty.
\newblock {\em Graph Theory with Applications}.
\newblock Elsevier, New York, 1976.

\bibitem{Mashaghi2004}
A~R Mashaghi, A~Ramezanpour, and V~Karimipour.
\newblock {Investigation of a protein complex network}.
\newblock {\em Eur. Phys. J. B - Condens. Matter Complex Syst.},
  41(1):113--121, 2004.

\bibitem{Shah2019}
Preya Shah, Arian Ashourvan, Fadi Mikhail, Adam Pines, Lohith Kini, Kelly
  Oechsel, Sandhitsu~R Das, Joel~M Stein, Russell~T Shinohara, Danielle~S
  Bassett, Brian Litt, and Kathryn~A Davis.
\newblock {Characterizing the role of the structural connectome in seizure
  dynamics}.
\newblock {\em Brain}, 142(7):1955--1972, 05 2019.

\bibitem{FORTUNATO201075}
Santo Fortunato.
\newblock Community detection in graphs.
\newblock {\em Physics Reports}, 486(3):75--174, 2010.

\bibitem{Girvan2002}
M.~Girvan and M.~E.~J. Newman.
\newblock Community structure in social and biological networks.
\newblock {\em Proceedings of the National Academy of Sciences},
  99(12):7821--7826, 2002.

\bibitem{Fani2017}
Hossein Fani and Ebrahim Bagheri.
\newblock Community detection in social networks.
\newblock {\em Encyclopedia with Semantic Computing and Robotic Intelligence},
  01(01):1630001, 2017.

\bibitem{Nadakuditi2012}
Raj~Rao Nadakuditi and M.~E.~J. Newman.
\newblock Graph spectra and the detectability of community structure in
  networks.
\newblock {\em Phys. Rev. Lett.}, 108:188701, May 2012.

\bibitem{Brandes2006}
U.~Brandes, D.~Delling, M.~Gaertler, R.~Goerke, M.~Hoefer, Z.~Nikoloski, and
  D.~Wagner.
\newblock Maximizing modularity is hard, 2006.

\bibitem{Decelle2011}
Aurelien Decelle, Florent Krzakala, Cristopher Moore, and Lenka Zdeborov\'a.
\newblock Inference and phase transitions in the detection of modules in sparse
  networks.
\newblock {\em Phys. Rev. Lett.}, 107:065701, Aug 2011.

\bibitem{Newman2016}
M.~E.~J. Newman.
\newblock Equivalence between modularity optimization and maximum likelihood
  methods for community detection.
\newblock {\em Phys. Rev. E}, 94:052315, Nov 2016.

\bibitem{Arute2019}
Frank Arute, Kunal Arya, Ryan Babbush, Dave Bacon, Joseph~C Bardin, Rami
  Barends, Rupak Biswas, Sergio Boixo, Fernando G S~L Brandao, David~A Buell,
  Brian Burkett, Yu~Chen, Zijun Chen, Ben Chiaro, Roberto Collins, William
  Courtney, Andrew Dunsworth, Edward Farhi, Brooks Foxen, Austin Fowler, Craig
  Gidney, Marissa Giustina, Rob Graff, Keith Guerin, Steve Habegger, Matthew~P
  Harrigan, Michael~J Hartmann, Alan Ho, Markus Hoffmann, Trent Huang, Travis~S
  Humble, Sergei~V Isakov, Evan Jeffrey, Zhang Jiang, Dvir Kafri, Kostyantyn
  Kechedzhi, Julian Kelly, Paul~V Klimov, Sergey Knysh, Alexander Korotkov,
  Fedor Kostritsa, David Landhuis, Mike Lindmark, Erik Lucero, Dmitry Lyakh,
  Salvatore Mandra, Jarrod~R McClean, Matthew McEwen, Anthony Megrant, Xiao Mi,
  Kristel Michielsen, Masoud Mohseni, Josh Mutus, Ofer Naaman, Matthew Neeley,
  Charles Neill, Murphy~Yuezhen Niu, Eric Ostby, Andre Petukhov, John~C Platt,
  Chris Quintana, Eleanor~G Rieffel, Pedram Roushan, Nicholas~C Rubin, Daniel
  Sank, Kevin~J Satzinger, Vadim Smelyanskiy, Kevin~J Sung, Matthew~D
  Trevithick, Amit Vainsencher, Benjamin Villalonga, Theodore White, Z~Jamie
  Yao, Ping Yeh, Adam Zalcman, Hartmut Neven, and John~M Martinis.
\newblock {Quantum supremacy using a programmable superconducting processor}.
\newblock {\em Nature}, 574(7779):505--510, 2019.

\bibitem{Shaydulin2019}
Ruslan Shaydulin, Hayato Ushijima-Mwesigwa, Ilya Safro, Susan Mniszewski, and
  Yuri Alexeev.
\newblock Network community detection on small quantum computers.
\newblock {\em Advanced Quantum Technologies}, 2(9):1900029, 2019.

\bibitem{Denchev2016}
Vasil~S. Denchev, Sergio Boixo, Sergei~V. Isakov, Nan Ding, Ryan Babbush, Vadim
  Smelyanskiy, John Martinis, and Hartmut Neven.
\newblock What is the computational value of finite-range tunneling?
\newblock {\em Phys. Rev. X}, 6:031015, Aug 2016.

\bibitem{Albash2018}
Tameem Albash and Daniel~A. Lidar.
\newblock Demonstration of a scaling advantage for a quantum annealer over
  simulated annealing.
\newblock {\em Phys. Rev. X}, 8:031016, Jul 2018.

\bibitem{Grover1996}
Lov~K. Grover.
\newblock {A fast quantum mechanical algorithm for database search}.
\newblock In {\em Proceedings of the twenty-eighth annual ACM symposium on
  Theory of computing - STOC '96}, pages 212--219, Association for Computing
  Machinery, 1996. Philadelphia, Pennsylvania, USA.

\bibitem{Shor1997}
Peter~W. Shor.
\newblock Polynomial-time algorithms for prime factorization and discrete
  logarithms on a quantum computer.
\newblock {\em SIAM Journal on Computing}, 26(5):1484--1509, 1997.

\bibitem{Lloyd1996}
Seth Lloyd.
\newblock Universal quantum simulators.
\newblock {\em Science}, 273(5278):1073--1078, 1996.

\bibitem{Ushijima-Mwesigwa2017}
Hayato Ushijima-Mwesigwa, Christian F.~A. Negre, and Susan~M. Mniszewski.
\newblock Graph partitioning using quantum annealing on the d-wave system.
\newblock In {\em Proceedings of the Second International Workshop on Post
  Moores Era Supercomputing}, PMES'17, page 22–29, New York, NY, USA, 2017.
  Association for Computing Machinery.

\bibitem{Newman2004}
M.~E.~J. Newman and M.~Girvan.
\newblock Finding and evaluating community structure in networks.
\newblock {\em Phys. Rev. E}, 69:026113, Feb 2004.

\bibitem{Kadowaki1998}
Tadashi Kadowaki and Hidetoshi Nishimori.
\newblock Quantum annealing in the transverse ising model.
\newblock {\em Phys. Rev. E}, 58:5355--5363, Nov 1998.

\bibitem{Preskill2018}
John Preskill.
\newblock Quantum {C}omputing in the {NISQ} era and beyond.
\newblock {\em {Quantum}}, 2:79, August 2018.

\bibitem{Dalyac2021}
Constantin Dalyac, Lo{\"{i}}c Henriet, Emmanuel Jeandel, Wolfgang Lechner,
  Simon Perdrix, Marc Porcheron, and Margarita Veshchezerova.
\newblock {Qualifying quantum approaches for hard industrial optimization
  problems. A case study in the field of smart-charging of electric vehicles}.
\newblock {\em EPJ Quantum Technol.}, 8(1):12, 2021.

\bibitem{AKBAR2020}
Sana Akbar and Sri~Khetwat Saritha.
\newblock Towards quantum computing based community detection.
\newblock {\em Computer Science Review}, 38:100313, 2020.

\bibitem{Zahedinejad2020}
Ehsan Zahedinejad, Daniel Crawford, Clemens Adolphs, and Jaspreet~S. Oberoi.
\newblock Multiple global community detection in signed graphs.
\newblock In Kohei Arai, Rahul Bhatia, and Supriya Kapoor, editors, {\em
  Proceedings of the Future Technologies Conference (FTC) 2019}, pages
  688--707, Cham, 2020. Springer International Publishing.

\bibitem{Sedghpour2017}
Alireza~Saleh Sedghpour and Amin Nikanjam.
\newblock Overlapping community detection in social networks using a
  quantum-based genetic algorithm.
\newblock In {\em Proceedings of the Genetic and Evolutionary Computation
  Conference Companion}, GECCO '17, page 197–198, New York, NY, USA, 2017.
  Association for Computing Machinery.

\bibitem{Mukai2020}
Kanae Mukai and Naomichi Hatano.
\newblock Discrete-time quantum walk on complex networks for community
  detection.
\newblock {\em Phys. Rev. Res.}, 2:023378, Jun 2020.

\bibitem{Reittu2019}
Hannu Reittu, Ville Kotovirta, Lasse Leskelä, Hannu Rummukainen, and Tomi
  Räty.
\newblock Towards analyzing large graphs with quantum annealing.
\newblock In {\em 2019 IEEE International Conference on Big Data (Big Data)},
  pages 2457--2464, 2019.

\bibitem{Chan2011}
Emprise Y.~K. Chan and Dit-Yan Yeung.
\newblock A convex formulation of modularity maximization for community
  detection.
\newblock In {\em Proceedings of the Twenty-Second International Joint
  Conference on Artificial Intelligence - Volume Volume Three}, IJCAI'11, page
  2218–2225. AAAI Press, 2011.

\bibitem{Chen2018}
Yudong Chen, Xiaodong Li, and Jiaming Xu.
\newblock Convexified modularity maximization for degree-corrected stochastic
  block models.
\newblock {\em The Annals of Statistics}, 46(4):1573--1602, 2018.

\bibitem{Abdalla2022}
Pedro Abdalla and Afonso~S Bandeira.
\newblock {Community detection with a subsampled semidefinite program}.
\newblock {\em Sampl. Theory, Signal Process. Data Anal.}, 20(1):6, 2022.

\bibitem{LI2015}
Wenye Li.
\newblock Visualizing network communities with a semi-definite programming
  method.
\newblock {\em Information Sciences}, 321:1--13, 2015.
\newblock Security and privacy information technologies and applications for
  wireless pervasive computing environments.

\bibitem{Brandes2001}
Ulrik Brandes.
\newblock A faster algorithm for betweenness centrality.
\newblock {\em The Journal of Mathematical Sociology}, 25(2):163--177, 2001.

\bibitem{Negre2020}
Christian~F.A. Negre, Hayato Ushijima-Mwesigwa, and Susan~M. Mniszewski.
\newblock {Detecting multiple communities using quantum annealing on the D-Wave
  system}.
\newblock {\em PLoS ONE}, 15(2), 2020.

\bibitem{Chapuis2019}
Guillaume Chapuis, Hristo Djidjev, Georg Hahn, and Guillaume Rizk.
\newblock {Finding Maximum Cliques on the D-Wave Quantum Annealer}.
\newblock {\em J. Signal Process. Syst.}, 91(3):363--377, 2019.

\bibitem{Rosenberg1975}
Ivo~G. Rosenberg.
\newblock Reduction of bivalent maximization to the quadratic case.
\newblock {\em Cahiers du Centre d'etudes de recherche operationnelle},
  17:71--74, 1975.

\bibitem{Thorndike1953}
Robert~L Thorndike.
\newblock {Who belongs in the family?}
\newblock {\em Psychometrika}, 18(4):267--276, 1953.

\bibitem{Amin2018}
Mohammad~H. Amin, Evgeny Andriyash, Jason Rolfe, Bohdan Kulchytskyy, and Roger
  Melko.
\newblock Quantum boltzmann machine.
\newblock {\em Phys. Rev. X}, 8:021050, May 2018.

\bibitem{Holland1983}
Paul~W. Holland, Kathryn~Blackmond Laskey, and Samuel Leinhardt.
\newblock Stochastic blockmodels: First steps.
\newblock {\em Social Networks}, 5(2):109--137, 1983.

\bibitem{Fred2003}
Ana L.~N. Fred and Anil~K. Jain.
\newblock Robust data clustering.
\newblock {\em 2003 IEEE Computer Society Conference on Computer Vision and
  Pattern Recognition, 2003. Proceedings.}, 2:II--II, 2003.

\bibitem{Kuncheva2004}
L.I. Kuncheva and S.T. Hadjitodorov.
\newblock Using diversity in cluster ensembles.
\newblock In {\em 2004 IEEE International Conference on Systems, Man and
  Cybernetics (IEEE Cat. No.04CH37583)}, volume~2, pages 1214--1219 vol.2,
  2004.

\bibitem{Danon2005}
Leon Danon, Albert Díaz-Guilera, Jordi Duch, and Alex Arenas.
\newblock Comparing community structure identification.
\newblock {\em Journal of Statistical Mechanics: Theory and Experiment},
  2005(09):P09008, sep 2005.

\bibitem{Zachary1977}
Wayne~W. Zachary.
\newblock An information flow model for conflict and fission in small groups.
\newblock {\em Journal of Anthropological Research}, 33(4):452--473, 1977.

\bibitem{Lusseau2003}
David Lusseau, Karsten Schneider, Oliver~J Boisseau, Patti Haase, Elisabeth
  Slooten, and Steve~M Dawson.
\newblock The bottlenose dolphin community of doubtful sound features a large
  proportion of long-lasting associations.
\newblock {\em Behavioral Ecology and Sociobiology}, 54(4):396--405, 2003.

\bibitem{Knuth1993}
Donald~E. Knuth.
\newblock The stanford graphbase: A platform for combinatorial algorithms.
\newblock In {\em Proceedings of the Fourth Annual ACM-SIAM Symposium on
  Discrete Algorithms}, SODA '93, page 41–43, USA, 1993. Society for
  Industrial and Applied Mathematics.

\bibitem{Palla2005}
Gergely Palla, Imre Der{\'{e}}nyi, Ill{\'{e}}s Farkas, and Tam{\'{a}}s Vicsek.
\newblock {Uncovering the overlapping community structure of complex networks
  in nature and society}.
\newblock {\em Nature}, 435(7043):814--818, 2005.

\bibitem{Newman2006}
M.~E.~J. Newman.
\newblock Modularity and community structure in networks.
\newblock {\em Proceedings of the National Academy of Sciences},
  103(23):8577--8582, 2006.

\bibitem{Li2017}
Yong Li, Wenguo Li, Yi~Tan, Fang Liu, Yijia Cao, and Kwang~Y Lee.
\newblock {Hierarchical Decomposition for Betweenness Centrality Measure of
  Complex Networks}.
\newblock {\em Sci. Rep.}, 7(1):46491, 2017.

\bibitem{Sedgewick2001}
Robert Sedgewick.
\newblock {\em Algorithms in c, Part 5: Graph Algorithms, Third Edition}.
\newblock Addison-Wesley Professional, third edition, 2001.

\end{thebibliography}

\section*{Appendix}
\subsection{Proving theorem IV.1}
\label{subsec:apx-th-IV1}
In the following, we provide a proof for theorem \ref{th:minimality}, which states the following equation:
\begin{equation}
\mathcal{S}_{min}=\left\{\bigcup_{\substack{v_{i}\in V \\ x_{i}=0}}v_{i} \mid x=\argmin_{x\in\left\{0,1\right\}^{\left| V\right|}} 2P(x) -\sum_{v_{i}\in V} x_{i} \right\}
\end{equation}
Where, by definition, we have:
\begin{equation}
\mathcal{S}_{min}\defeq\left\{S \in \mathcal{S} \mid  \left|S\right| \leq \left|S'\right| \, \forall S'\in\mathcal{S} \right\}
\end{equation}
\begin{equation}
P\left(x\right)\defeq \sum_{\left(v_{i},v_{j}\right)\in V^{2}}a_{ij}\left(1-\delta_{c\left(v_{i}\right)c\left(v_{j}\right)}\right) x_{i}x_{j}
\end{equation}
Aiming to prove ``$\subseteq$'' and ``$\supseteq$'' individually, we first prove some lemmata.

\begin{lemma}
All $x\in\left\lbrace 0,1\right\rbrace^{\lvert V\rvert}$ satisfying $P(x)=0$ represent sets of separation-nodes.
\label{lem:Sep1}
\end{lemma}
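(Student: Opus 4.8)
The plan is to unpack the condition $P(x)=0$ termwise and translate it into a statement about the edges of the induced subgraph $G[V\setminus S]$, where $S\defeq\{v_i\in V\mid x_i=0\}$ is the candidate separation-node set encoded by the $0$-flag $x$. Since we work with a simple graph, $a_{ij}\in\{0,1\}$, and since $c$ is a genuine partition map we have $1-\delta_{c(v_i)c(v_j)}\in\{0,1\}$, and of course $x_ix_j\in\{0,1\}$; hence every summand of $P(x)$ is non-negative, so $P(x)=0$ forces each summand to vanish. Consequently, for every ordered pair $(v_i,v_j)\in V^{2}$ we obtain the implication $a_{ij}=1 \wedge c(v_i)\neq c(v_j) \;\Rightarrow\; x_ix_j=0$, i.e.\ at least one of $v_i,v_j$ lies in $S$. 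Rephrased: no edge of $G$ joining two nodes of different ground-truth communities survives in $G[V\setminus S]$, so every edge of $G[V\setminus S]$ joins two nodes of the \emph{same} community.

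Next I would run the standard propagation argument along connected components. Let $\overline{S}_i$ be any connected component of $G[V\setminus S]$ and pick $u,w\in\overline{S}_i$; choose a path $u=z_0,z_1,\dots,z_m=w$ inside $\overline{S}_i$. Each edge $(z_k,z_{k+1})$ lies in $G[V\setminus S]$, so by the previous step $c(z_k)=c(z_{k+1})$, and chaining these equalities gives $c(u)=c(w)$. Therefore all nodes of $\overline{S}_i$ belong to a single community $C_{j(i)}\in C$, i.e.\ $\overline{S}_i\subseteq C_{j(i)}$. The degenerate cases are harmless: isolated vertices of $G[V\setminus S]$ are components of size one and trivially satisfy this, and if $V\setminus S=\emptyset$ the family of components is empty and the refinement condition holds vacuously.

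Finally I would assemble the conclusion. The components $\{\overline{S}_i\}_i$ partition $V\setminus S$, and by the step above each block is contained in some block of $C$; this is exactly the statement that $\{\overline{S}_i\}_i$ is a refinement of $C$. Equivalently, $\phi\colon\overline{S}_i\mapsto C_{j(i)}$ is a well-defined refinement map satisfying $\overline{S}_i\subseteq\phi(\overline{S}_i)$, matching the alternative characterisation given after Definition~\ref{def:sepnodeset}. Hence $S$ is a set of separation-nodes, which proves the lemma.

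There is no deep obstacle here; the only points requiring genuine care are (i) justifying that every term of $P$ is non-negative so that "$P(x)=0$" can be read off summand by summand — this is where the simple-graph assumption $a_{ij}\in\{0,1\}$ and the fact that $c$ induces a partition are used — and (ii) handling the degenerate cases (single-vertex components, or $V\setminus S$ empty) so the refinement statement is literally, not merely morally, true. The substantive content, namely that same-community membership propagates along paths within a component, is the one-line connectivity argument above.
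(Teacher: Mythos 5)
Your proof is correct and is essentially the paper's argument in contrapositive form: the paper assumes $S\notin\mathcal{S}$ and finds an edge between different communities with neither endpoint in $S$, forcing $P(x)>0$, while you argue directly that $P(x)=0$ kills all such edges and propagate same-community membership along paths within each component. The extra care you take with non-negativity of the summands and degenerate cases is fine but does not change the substance.
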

\begin{proof}
Let $x$ be a binary vector such that $P(x)=0$ and let $S$ be the corresponding set of nodes. In order to prove the desired statement by contradiction, assume $S\notin \mathcal{S}$, which is equivalent to the existence of a connected component of the graph induced by $V\setminus S$ not being a subset of one community. Then at least two nodes $v_{i},v_{j}\in V$ must exist, that are connected via a path and belong to different communities. On this path, there must exist two adjacent nodes the belong to different communities with neither of them being an element of $S$. Therefore $P(x)$ must be bigger than $0$, yielding a contradiction.
\end{proof}

\begin{lemma}
The following equation states an alternative definition of the set containing all sets of separation-nodes.
\begin{equation}
\mathcal{S}=\left\{\bigcup_{\substack{v_{i}\in V \\ x_{i}=0}} v_i \mid P(x)=0 \right\}
\end{equation}
\label{lem:altSepNodeSetDef}
\end{lemma}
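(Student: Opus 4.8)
The plan is to prove the two set inclusions ``$\supseteq$'' and ``$\subseteq$'' separately, reusing Lemma~\ref{lem:Sep1} for the first and an explicit construction for the second. For ``$\supseteq$'', observe that every element of the right-hand set has the form $S = \bigcup_{v_i \in V,\, x_i = 0} v_i$ for some $x \in \{0,1\}^{|V|}$ with $P(x) = 0$; by Lemma~\ref{lem:Sep1} such an $x$ already represents a set of separation-nodes, so $S \in \mathcal{S}$ and nothing further is required.

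For ``$\subseteq$'', I would take an arbitrary $S \in \mathcal{S}$ and define $x \in \{0,1\}^{|V|}$ by setting $x_i = 0$ exactly when $v_i \in S$ and $x_i = 1$ otherwise. By construction $\bigcup_{v_i \in V,\, x_i = 0} v_i = S$, so it suffices to verify $P(x) = 0$. Inspecting the summands of $P(x) = \sum_{(v_i,v_j) \in V^2} a_{ij}\bigl(1 - \delta_{c(v_i)c(v_j)}\bigr) x_i x_j$, a summand can be nonzero only if $a_{ij} = 1$, i.e.\ $(v_i,v_j) \in E$, and $c(v_i) \neq c(v_j)$, and $x_i = x_j = 1$, i.e.\ $v_i, v_j \notin S$. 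But two adjacent nodes outside $S$ lie in the same connected component of the graph induced by $V \setminus S$, and since $S$ is a separation-node set that component is contained in a single ground-truth community, forcing $c(v_i) = c(v_j)$ --- a contradiction. Hence every summand vanishes, $P(x) = 0$, and $S$ belongs to the right-hand set. Combining both inclusions gives the claimed equality.

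I do not expect a genuine obstacle: the core of the argument is essentially the contrapositive of Lemma~\ref{lem:Sep1}, paired with the observation that restricting the map $x \mapsto \bigcup_{v_i \in V,\, x_i = 0} v_i$ to the zero set of $P$ makes it surjective onto $\mathcal{S}$. The only care-points are bookkeeping ones: unwinding the refinement condition of Definition~\ref{def:sepnodeset} into the statement ``adjacent nodes in the same component of $V\setminus S$ share a community'', and checking that the degenerate cases $S = V$ (all components empty, trivially a refinement) and $S = \emptyset$ (the graph's own components must already refine $C$) are handled uniformly by the same reasoning.
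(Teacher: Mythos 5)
Your proposal is correct and follows essentially the same route as the paper: the ``$\supseteq$'' inclusion is delegated to Lemma~\ref{lem:Sep1}, and the ``$\subseteq$'' inclusion rests on the same key observation that a nonzero summand of $P$ would give two adjacent nodes outside $S$ in different communities, lying in one connected component of the graph induced by $V\setminus S$ and thus contradicting the refinement property of Definition~\ref{def:sepnodeset}. The only cosmetic difference is that you argue summand-by-summand (contrapositively) while the paper assumes $P(x)\neq 0$ and derives the contradiction directly; these are logically the same argument.
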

\begin{proof}
Using lemma \ref{lem:Sep1} to show  ``$\supseteq$'', we now show ``$\subseteq$''. Let $S\in\mathcal{S}$ be an arbitrary separation-node set and $x$ the corresponding binary vector $0$-flagging the nodes belonging to $S$. Assuming $P(x)\neq 0$, at least two adjacent nodes $v_{i},v_{j}\in V$ belonging to different communities exist following the definition of $P$. These nodes subsequently belong to the same connected component $\overline{S}_i$ of the graph induced by $V\setminus S$ implicating that no community can exist, that resembles a superset of the nodes inducing the connected component $\overline{S}_i$. Therefore $S$ can not be a set of separation-nodes as the corresponding refinement map can not exist, yielding a contradiction and showing $P(x)= 0$.
\end{proof}

\begin{lemma}
For every  $S\subset V$ satisfying $P(x)>0$, there exists a superset $\tilde{S}\supset S$ such that $\tilde{x}^{T}Q\tilde{x}<x^{T}Qx$, with $Q$ defined such that $x^{T}Qx=2P(x) -\sum_{v_{i}\in V} x_{i}$ and $\tilde{x}$ corresponding to $\tilde{S}$.
\label{lem:liftEmPenalties}
\end{lemma}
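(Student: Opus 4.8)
The plan is to convert the hypothesis $P(x)>0$ directly into an explicit descent step: a single node $v_i\notin S$ whose insertion into $S$ strictly lowers the objective $f(x)\defeq 2P(x)-\sum_{v_j\in V}x_j=x^{T}Qx$.

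First I would read off from the definition of $P$ that $P(x)>0$ forces at least one summand $a_{ij}\left(1-\delta_{c(v_i)c(v_j)}\right)x_ix_j$ to be strictly positive, i.e., there is an edge $v_iv_j$ with $c(v_i)\neq c(v_j)$ and $x_i=x_j=1$ --- equivalently, an edge between two differently labelled nodes, neither of which lies in $S$. I then fix such a $v_i$ and set $\tilde S\defeq S\cup\{v_i\}$, letting $\tilde x$ be obtained from $x$ by switching $x_i$ from $1$ to $0$; since $x_i=1$ witnesses $v_i\notin S$, we indeed have $\tilde S\supsetneq S$.

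Next I would compute $f(\tilde x)-f(x)$ for this one-bit flip. As the graph has no self-loops ($a_{ii}=0$), switching $x_i$ from $1$ to $0$ changes the linear part $-\sum_{v_j\in V}x_j$ by $+1$, and changes $P$ by $-2k_i$, where $k_i$ is the number of edges joining $v_i$ to a node in a different community that is not in $S$. The one point needing care is this factor $2$: because $P$ sums over \emph{ordered} pairs, each such incident edge at $v_i$ is counted twice. Hence
\begin{equation*}
f(\tilde x)-f(x)=2(-2k_i)+1=1-4k_i,
\end{equation*}
and the choice of $v_i$ gives $k_i\geq 1$, so $f(\tilde x)-f(x)\leq -3<0$, which is exactly $\tilde x^{T}Q\tilde x<x^{T}Qx$.

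The only genuine obstacle is this constant bookkeeping, and it is mild: any coefficient strictly larger than $1$ multiplying $k_i$ already forces strict descent, and for multigraphs $k_i$ only grows, so the inequality is never endangered. Finally, note that $\tilde S$ need not itself be a separation-node set --- the lemma merely supplies one step of a strictly monotone descent in $f$ over the finite cube $\left\{0,1\right\}^{\left|V\right|}$, which therefore terminates at some $x^{\ast}$ with $P(x^{\ast})=0$, i.e., a separation-node set by Lemma \ref{lem:altSepNodeSetDef}; this is how the lemma feeds the ``$\supseteq$'' inclusion of Theorem \ref{th:minimality}.
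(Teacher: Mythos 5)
Your proof is correct and follows essentially the same route as the paper's: locate a penalized edge (both endpoints outside $S$, different communities), add one endpoint to $S$, and observe that the weighted penalty drops by at least $4$ (factor $2$ from ordered pairs times the weight $2$) while the linear term rises by only $1$, giving strict descent. Your explicit computation $f(\tilde x)-f(x)=1-4k_i\leq -3$ is in fact slightly more careful than the paper's write-up, and you correctly state that the offending pair lies in \emph{different} communities, where the paper's prose contains a slip ("same community").
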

\begin{proof}
Let $S\subset V$ be a set of nodes such that the corresponding penalty term $P(x)$ is bigger than zero. This implicates the existence of a pair of incident nodes $v,w\in V$ being part of the same community while neither $v\in S$ nor $w\in S$. Then the set $\tilde{S}:=S\cup\left\{v\right\}$ (without loss of generality, we could also define $\tilde{S}:=S\cup\left\{w\right\}$ while achieving the same) has a smaller QUBO-value compared to $S$: With a decrement of at least $4$ in the the penalty term (i.e., taking its weighting of $2$ into account) and an increment in the cost function (i.e., the sum of the $x_i$'s) of $1$, we get $\tilde{x}^{T}Q\tilde{x}\leq x^{T}Qx - 1$, completing the proof.
\end{proof}

\begin{corollary}
$x=\argmin_{x\in\left\{0,1\right\}^{\left| V\right|}} 2P(x) -\sum_{v_{i}\in V} x_{i} \Rightarrow P(x)=0$.
\label{cor:PxZero}
\end{corollary}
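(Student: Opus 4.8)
The plan is to argue by contradiction, leaning almost entirely on Lemma~\ref{lem:liftEmPenalties}. Let $x$ be any vector attaining $\min_{x\in\left\{0,1\right\}^{\left|V\right|}} 2P(x)-\sum_{v_{i}\in V}x_{i}$, let $Q$ be the matrix with $x^{T}Qx = 2P(x)-\sum_{v_{i}\in V}x_{i}$ (as in Lemma~\ref{lem:liftEmPenalties}), and let $S$ be the node set that $x$ $0$-flags. Suppose, for contradiction, that $P(x)>0$.

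First I would verify that the hypothesis of Lemma~\ref{lem:liftEmPenalties} is satisfied, i.e.\ that $S$ is a \emph{proper} subset of $V$. Since $P(x)=\sum_{(v_{i},v_{j})\in V^{2}}a_{ij}\left(1-\delta_{c(v_{i})c(v_{j})}\right)x_{i}x_{j}>0$, there is at least one pair $v_{i},v_{j}$ with $x_{i}=x_{j}=1$, so at least two nodes lie outside $S$, whence $S\subsetneq V$. This is the only edge case worth spelling out: if $S=V$ then every $x_{i}=0$, so $P(x)=0$ automatically and the assumption $P(x)>0$ cannot arise there. Having ruled this out, Lemma~\ref{lem:liftEmPenalties} applies verbatim and produces a superset $\tilde{S}\supsetneq S$ whose indicator $\tilde{x}$ satisfies $\tilde{x}^{T}Q\tilde{x}<x^{T}Qx$. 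This contradicts the assumption that $x$ minimises $x\mapsto x^{T}Qx = 2P(x)-\sum_{v_{i}\in V}x_{i}$, and therefore $P(x)=0$, which is the claim. (If one prefers to read ``$x=\argmin(\cdots)$'' as ``$x$ is an element of the argmin set'', the same contradiction goes through unchanged.)

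I expect no genuine obstacle: all the work sits in Lemma~\ref{lem:liftEmPenalties}, and this corollary is just the observation that a strictly improving move exists whenever the penalty is positive, so any minimiser must have zero penalty. The only points requiring care are the bookkeeping around the $0$-flag convention (for a minimiser, $x_{i}=1$ is the ``default'' and $x_{i}=0$ marks separation nodes, since one wants to keep as many nodes as possible out of $S$ while maintaining $P=0$) and, as noted above, excluding the degenerate case $S=V$ so that Lemma~\ref{lem:liftEmPenalties} can be invoked directly.
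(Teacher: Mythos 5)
Your proposal is correct and takes essentially the same route as the paper: the paper's proof of Corollary~\ref{cor:PxZero} is exactly the observation that $P(x)\neq 0$ would, via Lemma~\ref{lem:liftEmPenalties}, yield a strictly better vector and thus contradict minimality. Your extra bookkeeping (checking $S\subsetneq V$ from $P(x)>0$) is a harmless refinement of the same one-line argument.
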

\begin{proof}
This result  follows directly from application of lemma \ref{lem:liftEmPenalties}, as $P(x)\neq 0$ would violate the minimality property of $x$.
\end{proof}

With these lemmata, we are now ready to prove theorem \ref{th:minimality}:

\begin{proof}
Let $Q\in\mathbb{R}^{\left| V\right|\times \left| V\right|}$ be defined such that:
\begin{equation}
x^{T}Qx = 2P(x) - \sum_{v_{i}\in V} x_{i}
\label{eq:defQmin}
\end{equation}
We start by proving ``$\subseteq$'': Let $S\in \mathcal{S}_{min}$ and $x$ its corresponding $0$-flag vector, then we know by corollary \ref{cor:PxZero}, that $P(x)=0$. Therefore $x^{T}Qx = \left| S\right| - \left| V\right|\eqdef \overline{s_{min}}$. It is sufficient to show, that $\overline{s_{min}} = \min_{x\in\left\{0,1\right\}^{\left| V\right|}} -\sum_{v_{i}\in V} x_{i} + 2P(x)$. For this, we assume, that there exists an $\widetilde{x}$ such that $\widetilde{x}^{T}Q\widetilde{x}<\overline{s_{min}}$. Now, as $P\rightarrow \mathbb{N}_{0}$ two possibilities exist:
\begin{enumerate}
\item $P(\widetilde{x})=0$ and the separation-node set $\widetilde{S}$ is smaller than $S$.
\item $P(\widetilde{x})>0$ and the separation-node set $\widetilde{S}$ is much smaller than $S$.
\end{enumerate}
As we can see using lemma \ref{lem:liftEmPenalties}, we can reduce the latter case to the former case by iteratively eradicating all penalties. Now, using corollary \ref{cor:PxZero}, $\widetilde{S}$ is a separation-node set and by definition of $\widetilde{S}$, $\left|\widetilde{S}\right|<s$ yielding a contradiction to the minimality of $\mathcal{S}_{min}$ and thereby proving ``$\subseteq$''.

We now prove ``$\supseteq$'': Let $x^{*}\defeq \argmin_{x\in\left\{0,1\right\}^{\left| V\right|}} 2P(x)-\sum_{v_{i}\in V} x_{i}$ and let $S^{*}$ be the node set corresponding to $x^{*}$. As we can see using lemma \ref{lem:liftEmPenalties}, $P(x^{*})$ must be zero, otherwise $x^{*}$ could not be minimal in the sense of satisfying its definition. Therefore, $S^{*}$ is a separation-node set according to lemma \ref{lem:altSepNodeSetDef}. Assuming $S^{*}\notin\mathcal{S}_{min}$ yields $\left|S^{*}\right|\neq \left|S\right|$, for an arbitrary $S\in\mathcal{S}_{min}$ and thus two cases are possible:
\begin{enumerate}
\item $\left|S^{*}\right|< \left|S\right|$.
\item $\left|S^{*}\right|> \left|S\right|$.
\end{enumerate}
The former yields a contradiction to $\mathcal{S}_{min}$ being minimal and the latter yields a contradiction to the minimality of $x^{*}$.
\end{proof}

\subsection{Constructing penalty terms for the in- and surjectivity constraints}
\label{subsec:apx-insur}
In this section, we formulate penalty terms realizing the in- and surjectivity constraints for separation-node sets. Instead of solving this seemingly non-straightforward task directly, we will realize the individual constraints with terms, that are larger than $0$ iff the constraint is satisfied and $0$ otherwise. Exploiting the possibilities of PUBO, we will show that the respective terms can be used to build penalty functions using a moderate amount of ancillary variables.

\begin{lemma}
The separation-node set associated with the $0$-flag vector $x\in\left\lbrace 0,1\right\rbrace ^{\left|V\right|}$ is surjective iff $\sum_{v_j\in V} \delta_{c(v_i)c(v_j)}x_j>0$ for all $v_i\in V$.
\label{lem:surSignTerm}
\end{lemma}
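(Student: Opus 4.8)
The plan is to collapse the pointwise condition ``$\sum_{v_j\in V}\delta_{c(v_i)c(v_j)}x_j>0$ for all $v_i\in V$'' into a per-community statement, and then match that statement against the definition of surjectivity of the refinement map $\phi$ associated with the separation-node set $S=\bigcup_{x_i=0}v_i$ (so that $x_j=1$ marks precisely the non-separation nodes, i.e.\ the elements of $V\setminus S$). Throughout I use that the hypothesis ``the separation-node set associated with $x$'' means $x$ encodes a valid $S\in\mathcal{S}$, so by Definition \ref{def:sepnodeset} the connected components of the graph induced by $V\setminus S$ refine $C$ and hence $\phi$ is well defined.

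First I would observe that, for a fixed $v_i$, the quantity $\sum_{v_j\in V}\delta_{c(v_i)c(v_j)}x_j$ is exactly the number of non-separation nodes lying in the community $c(v_i)$. Since this value depends on $v_i$ only through $c(v_i)$, and since every community is nonempty so that $c$ is onto $C$, the condition ``$\sum_{v_j\in V}\delta_{c(v_i)c(v_j)}x_j>0$ for all $v_i\in V$'' is equivalent to: every community $C_j\in C$ satisfies $C_j\cap(V\setminus S)\neq\emptyset$, i.e.\ every community contains at least one non-separation node.

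It then remains to prove that the latter is equivalent to surjectivity of $\phi$. For ``surjective $\Rightarrow$ condition'': if $\phi$ is surjective, then for each $C_j\in C$ there is a connected component $\overline{S}_i$ of the graph induced by $V\setminus S$ with $\phi(\overline{S}_i)=C_j$ and $\overline{S}_i\subseteq C_j$; as connected components are nonempty and consist of non-separation nodes, $\emptyset\neq\overline{S}_i\subseteq C_j\cap(V\setminus S)$. For the converse: assuming every community meets $V\setminus S$, pick an arbitrary $C_j$ and a node $v\in C_j\setminus S$; then $v$ lies in a unique connected component $\overline{S}_i$ of the graph induced by $V\setminus S$, which by the separation-node property is contained in a single community, and since $v\in\overline{S}_i$, $v\in C_j$, and the communities partition $V$, that community is $C_j$, so $\phi(\overline{S}_i)=C_j$ and $C_j$ lies in the image of $\phi$; as $C_j$ was arbitrary, $\phi$ is surjective.

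I do not anticipate a real obstacle: the only points needing care are handling the quantifier over nodes versus communities (using that $c$ is onto and that communities are nonempty) and invoking that each non-separation node belongs to exactly one connected component, which in turn sits inside exactly one community by Definition \ref{def:sepnodeset}. Chaining these elementary observations yields the stated equivalence.
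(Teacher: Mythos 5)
Your proof is correct and complete. The paper itself gives no argument for Lemma \ref{lem:surSignTerm} (it is ``left as an exercise to the reader''), and your chain of equivalences -- reading $\sum_{v_j\in V}\delta_{c(v_i)c(v_j)}x_j$ as the count of non-separation nodes in the community $c(v_i)$, reducing the universally quantified condition to ``every community meets $V\setminus S$'', and then using the refinement property of Definition \ref{def:sepnodeset} (each nonempty connected component lies in exactly one community, and communities are disjoint) to identify that condition with surjectivity of $\phi$ -- is exactly the ``close inspection of the sum'' the authors intend, with the quantifier handling and the well-definedness of $\phi$ spelled out properly. No gaps.
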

\begin{proof}
    This equivalence can be trivially observed by closely inspecting the sum and is left as an exercise to the reader.
\end{proof}

As the heuristic approaches presented in this work only allow for the estimation of $\delta_{c(v_i)c(v_j)}$ for adjacent node pairs $\left(v_i,v_j\right)\in E$, no direct estimation for $\delta_{c(v_i)c(v_j)}$ seems accessible. However, we can use the estimation of $\delta_{c(v_i)c(v_j)}$ for adjacent node pairs, to estimate $\delta_{c(v_i)c(v_j)}$ for non-adjacent node pairs:
\begin{equation}
\delta_{c(v_i)c(v_j)}=\sgn \sum_{\overline{p}\in p(v_i,v_j)} \prod_{k=1}^{dim(\overline{p})-1}\delta_{c(\overline{p}_k)c(\overline{p}_{k+1})}
\end{equation}

Here, $p(v_i,v_j)$ denotes a function, that returns the set of all simple paths $(v_{\pi(1)},...,v_{\pi(l)})$ between $v_i$ and $v_j$. To allow for this convenient notation, we introduced $\pi : V \rightarrow \left\lbrace 1, ..., \left| V\right|\right\rbrace$ as the projection, mapping the indices of the path entries of the elements of $p(v_i,v_j)$ to their global indices from $\{v_1,...,v_n\}\in V$. In practice, these paths could be found using techniques presented in \cite{Sedgewick2001}.

The value inside sign function resembles the number of simple, intracommunity paths between $v_i$ and $v_j$. A general upper bound for the number of paths with these properties is $2^{\left| V\right| - 2}$, i.e., the number of all subsets of $V$ containing $v_i$ and $v_j$. For practical purposes, this term clearly is unsuitable, as small errors in the estimation of $\delta_{c(v_i)c(v_j)}$ add up very quickly. Neglecting applicability concerns for reasons described in \ref{subsec:sep-node-sets}, we now show how to build a PUBO penalty function associated to the surjectivity term used in lemma \ref{lem:surSignTerm}.

\begin{lemma}
Given a function $f: \left\{0,1\right\}^n \rightarrow \left\{0,...,m\right\}$ for arbitrary $n,m\in\mathbb{N}$ representing a constraint via $f(x)>0$, the following penalty terms can be used to ensure that $f(x)>0$ in PUBO:
\begin{align}
\begin{split}
P_1(x,y)\defeq & \left(f(x)-\sum_{i=0}^{\left\lceil \log_2(m)\right\rceil}2^i y_i\right)^2\\
   = & \begin{cases}
      0 & \text{if $\mathrm{y}\defeq \displaystyle \sum_{i=0}^{\left\lceil \log_2(m)\right\rceil}2^i y_i = f(x)$}\\
      >0 & \text{otherwise.}
    \end{cases} 
\end{split}
\end{align}
\begin{align}
\begin{split}
P_2(y)\defeq & \prod_{i=0}^{\left\lceil \log_2(m)\right\rceil} \left(1-y_i\right)\\
= & \begin{cases}
      1 & \text{if $\mathrm{y}=0$}\\
      0 & \text{otherwise, i.e., $\mathrm{y}>0$.}
    \end{cases}
\end{split}
\end{align}
\label{lem:insurjPens}
\end{lemma}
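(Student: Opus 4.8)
The plan is to verify the two claimed closed forms for $P_1$ and $P_2$ directly, treating the auxiliary variables $y_0,\dots,y_{\lceil\log_2 m\rceil}$ as the binary digits of an integer $\mathrm{y}\defeq\sum_{i=0}^{\lceil\log_2 m\rceil}2^i y_i$. The key observation is that as $y$ ranges over $\{0,1\}^{\lceil\log_2 m\rceil+1}$, the value $\mathrm{y}$ ranges exactly over $\{0,1,\dots,2^{\lceil\log_2 m\rceil+1}-1\}\supseteq\{0,\dots,m\}$, so every possible value of $f(x)$ is representable, and $\mathrm{y}=0$ if and only if all the $y_i$ vanish.

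First I would treat $P_1$. Since $f(x)-\mathrm{y}$ is a real number, its square is nonnegative and equals $0$ precisely when $f(x)=\mathrm{y}$; this gives the stated case distinction immediately. The only thing worth spelling out is that for any fixed $x$ there \emph{exists} a choice of $y$ making $P_1(x,y)=0$, namely the binary expansion of $f(x)\in\{0,\dots,m\}$, which is legitimate because $f(x)\le m<2^{\lceil\log_2 m\rceil+1}$. This is what makes $P_1$ usable as a penalty: the minimum of $P_1$ over $y$ is always attainable and equal to $0$.

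Next I would treat $P_2$. The product $\prod_{i=0}^{\lceil\log_2 m\rceil}(1-y_i)$ is a product of factors each equal to $1$ (when $y_i=0$) or $0$ (when $y_i=1$); hence it equals $1$ iff every $y_i=0$, i.e.\ iff $\mathrm{y}=0$, and equals $0$ otherwise. Combining the two: if we add $P_1$ (with a large enough weight to force $\mathrm{y}=f(x)$ at any minimizer) and $P_2$, then at a joint minimizer $\mathrm{y}=f(x)$, so $P_2$ evaluates to $1$ when $f(x)=0$ and to $0$ when $f(x)>0$ — exactly the indicator of the forbidden case, so $P_1+P_2$ penalizes $f(x)=0$ by a positive amount while vanishing whenever $f(x)>0$. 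I would state this combined conclusion explicitly since it is the actual content one wants from the lemma.

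The main obstacle here is essentially bookkeeping rather than mathematics: making sure the index range $0,\dots,\lceil\log_2 m\rceil$ really does cover all residues $0,\dots,m$ (it does, with room to spare, since $2^{\lceil\log_2 m\rceil}\ge m$ so $2^{\lceil\log_2 m\rceil+1}-1\ge 2m-1\ge m$ for $m\ge1$), and handling the degenerate small cases — $m=0$, where the constraint $f(x)>0$ is unsatisfiable and the formulas still make formal sense, and $m=1$, where $\lceil\log_2 1\rceil=0$ and a single auxiliary bit suffices. I would also note in passing that $P_1$ is genuinely a PUBO (indeed quadratic in the pair $(f,y)$, hence polynomial of the same degree as $f$ in $x$) and $P_2$ has degree $\lceil\log_2 m\rceil+1$ in the auxiliary variables, which is where the $\mathcal{O}(\log_2|V|)$ degree quoted earlier in the paper comes from when this lemma is applied to the surjectivity term of Lemma~\ref{lem:surSignTerm}.
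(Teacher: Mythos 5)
Your proposal is correct and follows essentially the same route as the paper's (very terse) proof: direct verification of the two case distinctions, plus the observation that $\min_{y}\left(P_1(x,y)+P_2(y)\right)=0$ iff $f(x)>0$, with your added detail that the binary expansion of $f(x)\le m<2^{\lceil\log_2 m\rceil+1}$ is always representable being a worthwhile explicit step the paper glosses over. One small remark: no extra weight on $P_1$ is actually needed to separate it from $P_2$, since $f(x)-\mathrm{y}$ is an integer, so $P_1\ge 1$ whenever $\mathrm{y}\neq f(x)$ and the combined sum is already $\ge 1$ in every configuration with $f(x)=0$; the weighting question only arises when adding these penalties to the main objective.
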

\begin{proof}
    Clearly, $min_{x,y} P_1(x,y) + P_2(y)=0$ $\iff$ $f(x)>0$ and thus $P_1(x,y) + P_2(y)>0$ $\iff$ $f(x)=0$.
\end{proof}

When denoting the surjectivity constraint from lemma \ref{lem:surSignTerm} as $f(x)$, we can see, that $f(x)<\left| V\right|$ for every $v_i \in V$. Therefore, we can use lemma \ref{lem:insurjPens} to formulate a penalty term for the surjectivity constraint at the expense of at most $\left| V\right| \lceil\log_{2}\left| V\right|\rceil$ ancillary qubits.

With these results, we are now ready to formulate the following PUBO penalty term for injectivity.

\begin{lemma}
$\overline{\sigma}_{ij}(x)$ is positive for every $v_i\in V$ and $v_j\in c(v_i)\setminus\left\lbrace v_i\right\rbrace$ not contained in the separation-node set iff the separation-node set associated with the $0$-flag vector $x\in\left\lbrace 0,1\right\rbrace ^{\left|V\right|}$ is injective, and $0$, otherwise.
\begin{equation}
\overline{\sigma}_{ij}(x)\defeq \sum_{\overline{p}\in p(v_i,v_j)} \prod_{k=1}^{dim(\overline{p})-1}\delta_{c(\overline{p}_k)c(\overline{p}_{k+1})}x_{\pi(\overline{p}_k)}x_{\pi(\overline{p}_{k+1})}
\end{equation}
\label{lem:inSignTerm}
\end{lemma}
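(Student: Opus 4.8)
The plan is to give $\overline{\sigma}_{ij}(x)$ a combinatorial meaning and then read off both directions of the biconditional by unfolding Definition~\ref{def:sepnodeset}. First I would analyse a single summand. Every factor in $\overline{\sigma}_{ij}(x)$ is a Kronecker delta or a binary flag, so $\overline{\sigma}_{ij}(x)$ is a non-negative integer. Fix a simple $v_i$--$v_j$ path $\overline{p}=(\overline{p}_1,\dots,\overline{p}_l)$ with $\overline{p}_1=v_i$ and $\overline{p}_l=v_j$, and write $l=\dim(\overline{p})$. The product $\prod_{k=1}^{l-1}\delta_{c(\overline{p}_k)c(\overline{p}_{k+1})}\,x_{\pi(\overline{p}_k)}\,x_{\pi(\overline{p}_{k+1})}$ equals $1$ exactly when (i) consecutive nodes of $\overline{p}$ always lie in the same community, so---since $C$ partitions $V$---the whole path lies in $c(v_i)$, and (ii) every node of $\overline{p}$, including the endpoints $v_i$ and $v_j$, which enter through the boundary factors $x_{\pi(\overline{p}_1)}$ and $x_{\pi(\overline{p}_l)}$ of the terms $k=1$ and $k=l-1$, carries flag $1$, i.e.\ lies outside $S$; otherwise the product is $0$. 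Hence $\overline{\sigma}_{ij}(x)$ counts precisely the simple $v_i$--$v_j$ paths contained in $c(v_i)\setminus S$, and in particular $\overline{\sigma}_{ij}(x)>0$ iff at least one such path exists.

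Next I would recast path-existence as a statement about the connected components of the graph induced on $V\setminus S$. Suppose $v_i,v_j\notin S$ and $c(v_i)=c(v_j)$; call such a pair \emph{admissible}. A path inside $c(v_i)\setminus S$ is in particular a path of $V\setminus S$, so it certifies that $v_i$ and $v_j$ lie in one connected component $\overline{S}_k$. Conversely, by Definition~\ref{def:sepnodeset} the component $\overline{S}_k$ containing $v_i$ satisfies $\overline{S}_k\subseteq c(v_i)$, so if $v_j\in\overline{S}_k$ then any simple $v_i$--$v_j$ path inside $\overline{S}_k$ already lies in $c(v_i)\setminus S$. Therefore, for an admissible pair, $\overline{\sigma}_{ij}(x)>0$ iff $v_i$ and $v_j$ lie in the same connected component of $V\setminus S$.

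The biconditional then follows from the refinement map $\phi$ of Definition~\ref{def:sepnodeset}. For ``$\Leftarrow$'': if $S$ is injective and $(v_i,v_j)$ is admissible, then $\phi$ maps both $\overline{S}_k\ni v_i$ and $\overline{S}_{k'}\ni v_j$ to the common community $c(v_i)=c(v_j)$, so injectivity of $\phi$ forces $\overline{S}_k=\overline{S}_{k'}$, whence $\overline{\sigma}_{ij}(x)>0$. For ``$\Rightarrow$'' I would argue by contraposition: if $S$ is not injective there are distinct components $\overline{S}_k\ne\overline{S}_{k'}$ with $\phi(\overline{S}_k)=\phi(\overline{S}_{k'})=C_\ell$; choosing $v_i\in\overline{S}_k$ and $v_j\in\overline{S}_{k'}$ yields an admissible pair sitting in different components, so $\overline{\sigma}_{ij}(x)=0$ and the claimed positivity fails on this pair.

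The main obstacle is the bookkeeping in the first step: making sure the product really collapses to $0$ the moment the path leaves $c(v_i)$ or meets $S$---in particular that the endpoints $v_i$ and $v_j$ are themselves constrained by the two boundary factors---and handling the degenerate regimes (a community meeting $V\setminus S$ in a single node, or not at all) in which there is no admissible pair $(v_i,v_j)$ and the positivity assertion holds vacuously rather than failing. Once the combinatorial reading is pinned down, the remainder is a short unfolding of the separation-node-set definition and its refinement map.
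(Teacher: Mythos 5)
Your proposal is correct and takes essentially the same route as the paper: the paper's proof consists solely of the combinatorial reading of $\overline{\sigma}_{ij}(x)$ as detecting a simple path between $v_i$ and $v_j$ that stays inside $c(v_i)$ and avoids the separation-node set, which is exactly your first step. The remaining steps you spell out — translating path existence into membership in a common connected component of $V\setminus S$ and invoking the refinement map to obtain both directions of the injectivity biconditional — are left implicit in the paper, so your write-up is simply a more complete rendering of the same argument.
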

\begin{proof}
    Here, $\overline{\sigma}_{ij}(x)$ is positive, iff a simple path $\overline{p}\in p(v_i,v_j)$ between $v_i$ and $v_j$ exists, that consists exclusively of nodes assigned to the community of $v_i$ which are not part of the separation-node set, and $0$, otherwise.
\end{proof}

Analogously to lemma \ref{lem:surSignTerm}, we can observe, that $\overline{\sigma}_{ij}(x)\leq 2^{\left| V\right| - 2}$. Thus, we can use lemma \ref{lem:insurjPens} at the expense of less than of $\lceil\log_{2}\left| V\right|\rceil$ ancillary qubits for every single node pair $v_i\in V$ and $v_j\in c(v_i)\setminus\left\lbrace v_i\right\rbrace$. As injectivity demands the positiveness of $\overline{\sigma}_{ij}(x)$ for all node pairs $v_i\in V$ and $v_j\in c(v_i)\setminus\left\lbrace v_i\right\rbrace$, $\left| V\right|^{2} \lceil\log_{2}\left| V\right|\rceil$ ancillary qubits suffice to construct a penalty term for injectivity. The selection of appropriate node pairs $v_i\in V$ and $v_j\in c(v_i)\setminus\left\lbrace v_i\right\rbrace$ can be done using the term $x_i x_j \delta_{c(v_i)c(v_j)}$.

\end{document}